\documentclass[12pt]{article}

\usepackage[margin=2.5cm]{geometry}

\usepackage[T1]{fontenc}
\usepackage[utf8]{inputenc}
\usepackage{lmodern}
\usepackage{microtype}

\usepackage{amsmath, amssymb, amsthm, amsfonts}
\usepackage{bm}           
\usepackage{bbm}          
\usepackage{dsfont}        
\usepackage{nicefrac}      

\usepackage{graphicx}
\usepackage{subcaption}    
\usepackage{booktabs}      
\usepackage{longtable}
\usepackage{multirow}

\usepackage{algorithm}
\usepackage{algpseudocode}

\usepackage[dvipsnames]{xcolor}
\PassOptionsToPackage{hyphens}{url}
\usepackage[
  bookmarks=true,
  breaklinks=true,
  colorlinks=true,
  linkcolor=blue,
  citecolor=Blue,
  urlcolor=Blue
]{hyperref}
\usepackage{bookmark}

\usepackage[]{natbib}
\theoremstyle{plain}
\newtheorem{theorem}{Theorem}
\newtheorem{lemma}[theorem]{Lemma}

\newtheorem{corollary}[theorem]{Corollary}

\theoremstyle{definition}

\newtheorem{assumption}{Assumption}

\theoremstyle{remark}

\DeclareMathOperator*{\argmin}{arg\,min}

\newcommand{\R}{\mathbb{R}}

\newcommand{\dt}{\mathsf{d}}
\def\ss{\boldsymbol{s}}
\def\tt{\boldsymbol{\theta}}

\hypersetup{
  pdftitle={Preconditioned Robust Neural Posterior Estimation for Misspecified Simulators},
  pdfauthor={Ryan Kelly},
  pdfkeywords={approximate Bayesian computation, model misspecification, neural posterior estimation, random forests, simulation-based inference},
}

\begin{document}

\title{\bf Preconditioned Robust Neural Posterior Estimation for Misspecified Simulators}

\author{
  Ryan P. Kelly\thanks{Email: \texttt{kellyrp@qut.edu.au}.
  }\hspace{.2cm}\\
  School of Mathematical Sciences,
  Queensland University of Technology\\
  and \\
  David T. Frazier \\
  Department of Econometrics and Business Statistics,
  Monash University\\
  and \\
  David J. Warne \\
  School of Mathematical Sciences,
  Queensland University of Technology\\
  and \\
  Christopher C. Drovandi \\
  School of Mathematical Sciences,
  Queensland University of Technology
}

\maketitle

\begin{abstract}
Simulation-based inference (SBI) enables parameter estimation for complex stochastic models with intractable likelihoods when model simulation is feasible. Neural posterior estimation (NPE) is a popular SBI approach that often achieves accurate inference with far fewer simulations than classical approaches. But in practice, neural approaches can be unreliable for two reasons: incompatible data summaries arising from model misspecification yield unreliable posteriors due to extrapolation, and prior-predictive draws can produce extreme summaries that lead to difficulties in obtaining an accurate posterior for the observed data of interest. Existing preconditioning schemes target well-specified settings, and their behaviour under misspecification remains unexplored. We study preconditioning under misspecification and propose preconditioned robust neural posterior estimation, which computes data-dependent weights that focus training near the observed summaries and fits a robust neural posterior approximation. We also introduce a forest-proximity preconditioning approach that uses tree-based proximity scores to down-weight outlying simulations and concentrate computation around the observed dataset. Across two synthetic examples and one real example with incompatible summaries and extreme prior-predictive behaviour, we demonstrate that preconditioning combined with robust NPE increases stability and improves accuracy, calibration, and posterior-predictive fit over standard baseline methods.
\end{abstract}




\section{Introduction}\label{sec:intro}

Mechanistic simulators are essential tools for modelling natural phenomena across diverse scientific fields. While increased computing power has enabled the development of complex simulators that faithfully encode domain knowledge, this complexity often renders the explicit likelihood function intractable. Simulation-based inference (SBI) enables Bayesian inference in this setting by utilising model simulations to approximate the posterior distribution \citep{cranmer_frontier_2020}. Traditional SBI methods, such as approximate Bayesian computation (ABC), estimate the posterior by comparing observed and simulated data through low-dimensional summary statistics and a discrepancy metric \citep{sisson_handbook_2018}. However, these approaches are sensitive to tuning parameters and typically require a prohibitive number of simulations as the dimensionality of the summary statistics or model parameters increases \citep{blum_approximate_2010, barber_rate_2015}.

In contrast, neural approaches overcome these computational bottlenecks by learning flexible conditional densities from simulated data. These methods, most notably neural posterior estimation \citep[NPE;][]{greenberg_automatic_2019, papamakarios_fast_2016}, employ expressive conditional density estimators such as normalising flows to target the posterior directly. Unlike ABC, which rejects simulations that do not match the observation, NPE utilises all simulations to learn a density estimator over the entire data space. Once trained, evaluating the posterior at the observation is computationally efficient \citep{radev_bayesflow_2023}. This allows neural methods to deliver favourable accuracy-budget trade-offs, often achieving accurate inference with far fewer simulations than classical approaches \citep{lueckmann_benchmarking_2021}. Applications span diverse scientific domains, for instance, NPE is used in astrophysics to infer reionisation parameters from early-universe radio-signal simulations \citep{greig_exploring_2024}, in neuroscience to constrain generative models of cortical connectivity \citep{boelts_simulation-based_2023}, and in hydrology to calibrate process-based watershed simulators \citep{hull_simulation-based_2024}.

\citet{frazier_statistical_2024} demonstrate that NPE can achieve statistical accuracy comparable to exact Bayesian inference, provided the number of simulations scales sufficiently with the sample size, and provided the training data covers the observed summary statistics. However, when the observation lies in the extreme tail of the prior-predictive distribution, due to model misspecification, the density estimator is forced to extrapolate from the available training data. Broad priors present an additional challenge by generating global simulation spaces containing numerical pathologies that waste the neural density estimator's representational capacity. For example, in a detailed L5 pyramidal-cell model \citep{beck_efficient_2022}, 99.98\% of prior-predictive draws contained undefined features requiring imputation. Similarly, simulated residuals in pulsar-timing analyses spanned nearly $14$ orders of magnitude \citep{shih_fast_2024}, necessitating aggressive rescaling and clipping during training. In the presence of extreme simulated data, we might wish to constrain the training focus to the relevant summary space, effectively reintroducing the region-selection logic of ABC to assist the density estimator.

Beyond the computational burden of broad priors, structural model misspecification fundamentally alters the inferential goal. In this $\mathcal{M}$-open scenario, posterior updating concentrates on a pseudo-truth defined by the closest representable approximation \citep{berk_limiting_1966, walker_bayesian_2013}. Here, exact data reproduction is neither feasible nor required; rather, the inference must distinguish between relevant features that inform the target and irrelevant mismatches (e.g., noise or unmodelled artefacts). However, standard Bayesian updating lacks this discrimination, distorting the posterior to accommodate any type of discrepancy in mismodelled features. In the context of SBI, this structural incompatibility means that the observed summaries lie in the extreme tails or outside the model's support \citep{cannon_investigating_2022}. Reliable inference thus requires a strategy that mitigates the impact of irrelevant mismatches while concentrating around the true or pseudo-true parameter value based on relevant features.

To address the computational inefficiency and instability caused by broad priors, we adopt the preconditioning framework of \citet{wang_preconditioned_2024}. This strategy restricts neural training to the local neighbourhood of the observed summary statistics, avoiding the vast and potentially uninformative global prior-predictive distribution. The preconditioning is achieved by employing a relatively computationally inexpensive pilot run, such as a coarse SMC-ABC step, to filter out simulations far from the observation. Crucially, as ABC naturally concentrates on the pseudo-truth even under misspecification \citep{frazier_model_2020}, this preconditioning step identifies the parameter region yielding summaries closest to the observation, discarding irrelevant or pathological draws. By filtering the training data, preconditioning forces the density estimator to allocate its limited capacity primarily to the relevant regions of the parameter space.

While \citet{wang_preconditioned_2024} introduced this preconditioning framework for the well-specified setting, its behaviour under misspecification remains unexplored. Conversely, existing robust SBI methods typically operate on the global simulation space, leaving them susceptible to extreme prior-predictive draws. This highlights a gap for methods that jointly address misspecification and prior-predictive efficiency. We propose preconditioned robust neural posterior estimation (PRNPE), an approach that integrates preconditioning with robust inference. This method employs data-dependent weights to concentrate training resources on relevant simulations, while mitigating the impact of incompatible summaries through robust estimation.

In this work, we demonstrate the utility of preconditioning in misspecified settings, validating our approach on synthetic examples and a real-world application characterised by incompatible summaries. We also introduce forest-proximity preconditioning, an adaptive method that leverages tree structures to isolate the relevant simulation neighbourhood without needing to specify tolerances or distances as in the ABC preconditioning approach of \citet{wang_preconditioned_2024}. 
Furthermore, we formalise the theoretical justification for this summary-based weighting by bounding the amortisation gap in terms of weighted moments of the distance between training and observed summaries, and establishing the conditional invariance of the reweighted design.

In Section~\ref{sec:background}, we provide the necessary background on SBI. Section~\ref{sec:prnpe} develops the theory of preconditioning and details the PRNPE framework, introducing our forest-proximity and SMC-ABC implementations. Section~\ref{sec:examples} evaluates the method on two synthetic examples and a real-world application, and Section~\ref{sec:discussion} concludes with limitations and directions for future work.

\section{Simulation-based Bayesian inference}\label{sec:background}
Let $\bm{y} = (y_1, \ldots, y_n)^{\top} \in \mathcal{Y}$ be generated by an unknown data-generating process (DGP) $P_{\star}$ with density $p_{\star}$.
We posit a parametric family $\mathcal{P}=\{P_{\bm{\theta}} \colon \bm{\theta}\in\Theta\}$ with prior $\pi(\bm{\theta})$ and likelihood $p(\bm{y}\mid\bm{\theta})$, and target the posterior
 \begin{equation*}
     \pi(\bm{\theta} \mid \bm{y}) = Z(\bm{y})^{-1} p(\bm{y} \mid \bm{\theta}) \pi(\bm{\theta}), \quad Z(\bm{y}) = \int_{\Theta} p(\bm{y} \mid \bm{\theta}) \pi(\bm{\theta}) \, \mathrm{d}\bm{\theta}.
 \end{equation*}
For computational efficiency, we reduce the data dimension via a deterministic summary map $S:\mathcal{Y}\to\mathcal{S}\subset\mathbb{R}^{d_s}$. Let $\bm{s}_y=S(\bm{y})$ denote observed summaries and, for simulated data $\bm{x}\sim P_{\bm{\theta}}$, let $\bm{s}=S(\bm{x})$.
For $\bm{\theta}\in\Theta$, let $p(\bm{s}\mid\bm{\theta})$ denote the implied density of the summaries $S(\bm{X})$ when $\bm{X}\sim P_{\bm{\theta}}$, so that observed and simulated summaries share the summary space $\mathcal{S}$.
The joint density is $p(\bm{\theta},\bm{s})=\pi(\bm{\theta})\,p(\bm{s}\mid\bm{\theta})$ with marginal $p(\bm{s})=\int p(\bm{s}\mid\bm{\theta})\pi(\bm{\theta})\,\mathrm{d}\bm{\theta}$, the prior-predictive density of summaries.
Throughout, bold symbols denote vectors
and dependence on $n$ is suppressed.

\subsection{Approximate Bayesian computation}

ABC compares simulated and observed summaries via a discrepancy $\rho \colon \mathcal{S} \times \mathcal{S} \rightarrow [0, \infty)$, a kernel $K_{\epsilon}$, and a tolerance $\epsilon>0$ \citep{sisson_handbook_2018}.
The smoothed likelihood is
\begin{equation*}
 L_{\epsilon}(\bm{s}_y \mid \bm{\theta}) = \int K_{\epsilon}\big(\rho(\bm{s}, \bm{s}_y)\big)\, p(\bm{s} \mid \bm{\theta})\,\mathrm{d}\bm{s}.\end{equation*}
The ABC posterior is
\begin{equation*}
    \pi_{\epsilon}(\bm{\theta}\mid \bm{s}_y) \propto \pi(\bm{\theta})\,L_{\epsilon}(\bm{s}_y \mid \bm{\theta}).
    \end{equation*}
As $\epsilon \rightarrow 0$, the approximation converges to the partial posterior $\pi(\bm{\theta}\mid \bm{s}_y)$ \citep{barber_rate_2015}, which recovers the exact posterior $\pi(\bm{\theta}\mid \bm{y})$ if $S$ is sufficient \citep{blum_approximate_2010}.
However, for rejection ABC, the acceptance probability of a simulation from the prior-predictive scales as $O(\epsilon^{d_s})$, creating a curse-of-dimensionality trade-off between computational efficiency and accuracy \citep{barber_rate_2015, csillery_abc_2012}.
Naïve rejection becomes computationally infeasible for achieving an accurate approximation of the partial posterior when the prior is diffuse relative to the posterior.
This motivates Markov chain Monte Carlo (MCMC) ABC \citep{marjoram_markov_2003}, which improves proposal efficiency via local steps, or sequential Monte Carlo (SMC) ABC \citep{sisson_sequential_2007, beaumont_adaptive_2009}, which propagate particles through a sequence of decreasing tolerances. Although these more advanced implementations of ABC are more efficient than rejection ABC, they ultimately suffer from the same curse of dimensionality. However, for the preconditioning detailed in Section~\ref{sec:prnpe}, we require only a coarse filtering of the prior-predictive distribution and need not drive $\epsilon \rightarrow 0$, as a moderate tolerance suffices to down-weight regions with negligible support near $\bm{s}_y$. In this paper, we employ the replenishment SMC ABC algorithm of \citet{drovandi_estimation_2011}, detailed in Appendix~\ref{sec:app_smcabc_background}.

A popular class of ABC methods, ABC random forests (ABC-RF), replaces explicit distances and tolerances with random forests trained to predict parameters from summaries \citep{raynal_abc_2019}. We provide background on random forests in Appendix~\ref{sec:app_rf_background}.
These approaches are robust to the dimensionality and choice of summary statistics.
We train independent regression forests for each parameter coordinate $\theta_j$. These forests define data-adaptive weights for a simulation $\bm{s}_i$ based on its leaf co-occurrence with the observation $\bm{s}_y$. Averaging across $B$ trees and all $d_{\bm{\theta}}$ parameter forests, the weight is
\begin{equation*}
  W^{\mathrm{RF}}(\bm{s}_i;\bm{s}_y)=\frac{1}{d_{\bm{\theta}}B}\sum_{j=1}^{d_{\bm{\theta}}}\sum_{b=1}^{B}\frac{\mathbb{I}\{\bm{s}_i\in L_{jb}(\bm{s}_y)\}}{|L_{jb}(\bm{s}_y)|}, \quad \text{with } \tilde w_i\propto W^{\mathrm{RF}}(\bm{s}_i;\bm{s}_y),
\end{equation*}
where $L_{jb}(\bm{s}_y)$ denotes the leaf containing $\bm{s}_y$ in tree $b$ of the forest for $\theta_j$, and $|L|$ is the count of training samples in that leaf.
These weights define tolerance‑free, data‑adaptive neighbourhoods around $\bm{s}_y$ \citep{raynal_abc_2019,meinshausen_quantile_2006}.
While this presented weighting scheme will be central to our preconditioning, similar ABC-RF frameworks also extend to model choice \citep{pudlo_reliable_2016}, posterior regression adjustment \citep{bi_random_2022}, and distributional forests for joint posteriors \citep{dinh_approximate_2025}.

\subsection{Neural density estimation}

While standard ABC methods discard simulations that are not close to the observation, NPE utilises all available simulations to learn a global approximation $q_{\bm{\phi}}(\bm{\theta} \mid \bm{s}) \approx \pi(\bm{\theta} \mid \bm{s})$.
Given a dataset of joint draws $\{(\bm{\theta}_i, \bm{s}_i)\}_{i=1}^N \sim p(\bm{\theta}, \bm{s})$, we optimise the neural network parameters $\bm{\phi} \in \Phi$ by minimising the expected Kullback-Leibler divergence between the true and approximate posteriors. This is equivalent to maximising the log-likelihood of the parameters given the summaries:
\begin{equation}\label{eq:npe_loss_prnpe}
  \bm{\phi}^* = \argmin_{\bm{\phi} \in \Phi} \mathbb{E}_{(\bm{\theta}, \bm{s}) \sim p(\bm{\theta}, \bm{s})} \bigl[- \log q_{\bm{\phi}} (\bm{\theta} \mid \bm{s})\bigr].\end{equation}
Normalising flows \citep{rezende_variational_2015} are a common choice for $q_{\bm{\phi}}$ as they provide tractable density evaluation and efficient sampling.
Sequential variants (SNPE) iteratively refine proposals to focus on the observation \citep{papamakarios_fast_2016, lueckmann_flexible_2017, greenberg_automatic_2019}, a strategy recently extended to round-free, online updates via dynamic SBI \citep{lyu_dynamic_2025}. 
However, we focus here on NPE trained on the prior, which serves as the baseline for our preconditioning approach.
Recent theoretical results establish the consistency and asymptotic normality of NPE under standard regularity conditions \citep{frazier_statistical_2024}.

From an amortised inference perspective, Eq.~\eqref{eq:npe_loss_prnpe} corresponds to minimising the population risk under a specific simulation design.
A design specifies a parameter proposal $\bm{\theta} \sim r(\bm{\theta})$, which induces a joint training distribution $p_{\text{train}}(\bm{\theta}, \bm{s})$.
Standard NPE sets $r=\pi$, thereby averaging performance over the prior-predictive distribution.
However, accurate inference is strictly required only at the observed $\bm{s}_y$; the discrepancy between the global average loss and the local accuracy at $\bm{s}_y$ is termed the amortisation gap \citep{cremer_inference_2018, zammit-mangion_neural_2025}.
Consequently, the choice of design $r(\bm{\theta})$ governs which regions of the parameter space dominate the loss, determining where the estimator is most reliable.

An alternative approach, neural likelihood estimation (NLE), trains a conditional density estimator $q_{\bm{\phi}}(\bm{s}\mid\bm{\theta}) \approx p(\bm{s}\mid\bm{\theta})$. Unlike NPE, the learnt likelihood is reusable across different priors without retraining \citep{cranmer_approximating_2016, papamakarios_sequential_2019}.
For inference, we combine the surrogate likelihood with the prior,
\begin{equation*}
    \hat{\pi}(\bm{\theta} \mid \bm{s}_y) \propto q_{\bm{\phi}^*}(\bm{s}_y \mid \bm{\theta}). \end{equation*}
Posterior samples are then drawn using MCMC, substituting $q_{\bm{\phi}^*}(\bm{s}_y \mid \bm{\theta})$ for the intractable likelihood.
Sequential NLE (SNL) adapts the simulation proposals to concentrate training data in regions where the likelihood is high \citep{papamakarios_sequential_2019}.
Finally, neural ratio estimation (NRE) learns the likelihood-to-evidence ratio $r_{\bm{\phi}}(\bm{s}, \bm{\theta}) \propto p(\bm{s} \mid \bm{\theta}) / p(\bm{s})$ via binary classification between joint and product‑of‑marginals samples, enabling inference via MCMC or importance weighting \citep{hermans_likelihood-free_2020, durkan_contrastive_2020}.

\subsection{Robust inference under model misspecification}

Model misspecification arises when the true DGP, $P_{\star}$, lies outside the posited family $\mathcal{P}$.
Adapting standard definitions to the summary space \citep{marin_relevant_2014, frazier_model_2020}, we define the binding functions for the model, $b(\bm{\theta}) = \mathbb{E}_{P_{\bm{\theta}}}[S(\bm{x})]$, and the observation, $b_{\star} = \mathbb{E}_{P_{\star}}[S(\bm{y})]$.
We characterise the model as incompatible if the summary-level mismatch
 \begin{equation}\label{eq:incompatibility}
    \epsilon^{*} = \inf_{\bm{\theta} \in \Theta} \rho(b_{\star}, b(\bm{\theta}))
 \end{equation}
is strictly positive. While ABC concentrates on the pseudo-true parameter that minimizes this discrepancy \citep{frazier_model_2020}, incompatibility poses a distinct challenge for neural estimators (e.g., NPE/NLE).
These networks are trained on simulations from a global distribution $p_{\text{train}}(\bm{s})$.
If $\epsilon^{*} > 0$, the observed $\bm{s}_y$ likely falls in a region of $\mathcal{S}$ with low (or zero) support under $p_{\text{train}}$.
Inference at $\bm{s}_y$ therefore requires neural network extrapolation, which often yields overconfident or misleading posteriors \citep{cannon_investigating_2022, schmitt_detecting_2024}; this exacerbates the general failure of Bayesian credible sets to maintain frequentist coverage under misspecification \citep{kleijn_bernstein-von-mises_2012}.

Strategies for robust SBI include constructing robust summary statistics, minimising generalised Bayesian losses, or explicitly modelling the error between simulated and observed data \citep{kelly_simulation-based_2025}; we adopt the latter approach.
At the summary level, this is often formulated via an additive error term $\bm{\Gamma}$, such that $\bm{s}_y = \bm{s} + \bm{\Gamma}$.
This transforms the likelihood into a convolution of the model pushforward density $p(\bm{s} \mid \bm{\theta})$ and the error density $h(\bm{\Gamma})$,
\begin{equation*}
    p(\bm{s}_y \mid \bm{\theta}) = \int p(\bm{s} \mid \bm{\theta}) h(\bm{s}_y - \bm{s}) \, \mathrm{d}\bm{s}.
\end{equation*}
This formulation extends concepts from computer model calibration \citep{kennedy_bayesian_2001, bayarri_modularization_2009} and relates to the view of ABC as exact inference under a specific error model \citep{ratmann_model_2009, wilkinson_approximate_2013}.

Robust Bayesian synthetic likelihood (RBSL) modifies the standard Bayesian synthetic likelihood (BSL) framework to accommodate model misspecification \citep{frazier_robust_2021}.
While standard BSL approximates the summary likelihood as a Gaussian $p(\bm{s}_y \mid \bm{\theta}) \approx \mathcal{N}(\bm{s}_y \mid \bm{\mu}_{\bm{\theta}}, \bm{\Sigma}_{\bm{\theta}})$ \citep{price_bayesian_2018, wood_statistical_2010}, RBSL targets the joint posterior $\pi(\bm{\theta}, \bm{\Gamma} \mid \bm{s}_y) \propto p_{\text{RBSL}}(\bm{s}_y \mid \bm{\theta}, \bm{\Gamma}) \,\pi(\bm{\theta}) \pi(\bm{\Gamma})$, introducing parameters $\bm{\Gamma}$ to absorb the error.
In the mean-adjustment variant (RBSL-M), the likelihood mean is shifted to $\bm{\mu}_{\bm{\theta}} + \bm{\sigma}_{\bm{\theta}} \odot \bm{\Gamma}$, where $\odot$ denotes the element-wise product and $\bm{\sigma}_{\bm{\theta}}$ is the vector of summary standard deviations.
Placing a sparsity-promoting prior on $\bm{\Gamma}$, such as a Laplace distribution, shrinks adjustments towards zero.
Consequently, posterior components of $\bm{\Gamma}$ that deviate significantly from the prior centred at zero indicates misspecification, providing a diagnostic for incompatible summaries.

Robust neural posterior estimation (RNPE) extends NPE by specifying an explicit error model on the summaries and marginalising out the error through the pre-trained posterior \citep{ward_robust_2022}.
In addition to the conditional density $q_{\bm{\phi}}(\bm{\theta}\mid \bm{s})$, RNPE learns a density estimator $h_{\bm{\psi}}(\bm{s})$ for the marginal distribution of simulated summaries.
Defining the error $\bm{\delta} = \bm{s}_y - \bm{s}$, RNPE assumes a component-wise spike-and-slab prior with indicators $z_j \in \{0,1\}$,
\begin{equation*}    
\delta_j \mid z_j \sim (1-z_j)\,\mathcal{N}(0,\tau_j^2) + z_j\,\mathrm{Cauchy}(0,\kappa_j), \quad z_j \sim \mathrm{Bernoulli}(\pi_j).
\end{equation*}
Here, the posterior probability $p(z_j=1 \mid \bm{s}_y, \bm{s})$ identifies whether summary $j$ is misspecified. Inference proceeds by sampling ``denoised'' summaries $\tilde{\bm{s}}$ from the target
\begin{equation*}
\widehat{p}(\bm{s} \mid \bm{s}_y) \propto p(\bm{s}_y \mid \bm{s})\, h_{\bm{\psi}}(\bm{s}),
\end{equation*}
where $p(\bm{s}_y \mid \bm{s})$ is induced by the error model, and then generating parameters via $\tilde{\bm{\theta}} \sim q_{\bm{\phi}}(\bm{\theta} \mid \tilde{\bm{s}})$.
A closely related method, robust neural likelihood estimation (RNLE), similarly introduces additive adjustment parameters but targets the joint posterior of $(\bm{\theta}, \bm{\Gamma})$ using a learnt surrogate likelihood \citep{kelly_misspecification-robust_2024}.

\section{Methods}\label{sec:prnpe}
We propose precondition robust neural posterior estimation (PRNPE), where we precondition simulations around $S(\bm{y})$, then fit $q_{\bm{\phi}}(\bm{\theta}\mid\bm{s})$ and denoise $S(\bm{y})$ via RNPE.

\subsection{Problem setup and preconditioning}\label{sec:prnpe_setup}

We target single-instance inference at the observed summary $\bm{s}_y=S(\bm{y})$, seeking the posterior $\pi(\bm{\theta} \mid \bm{s}_y)$, under the assumed model components $(\pi(\cdot), p(\bm{x} \mid \bm{\theta}), S(\cdot))$.
To approximate this posterior for $\bm{s}_y$, we learn an amortised estimator $q_{\bm{\phi}}(\bm{\theta} \mid \bm{s})$.
Standard NPE minimises the population risk over a simulation design with parameter law $r(\bm{\theta})$:
\[
\mathcal L(\bm{\phi})=\mathbb E_{(\bm{\theta},\bm{s})\sim r(\bm{\theta})p(\bm{s}\mid\bm{\theta})}\left[-\log q_{\bm{\phi}}(\bm{\theta}\mid \bm{s})\right].
\]

Practitioners routinely discard numerically failed simulations (NaNs or $\pm\infty$) using simple checks or a validity classifier \citep{lueckmann_flexible_2017}, and we assume such failures are filtered out prior to training. A subtler failure mode remains when weakly informative priors or stochastic simulators generate extreme but finite summaries: these few large‑magnitude \(\bm{s}\) values dominate the \(-\log q_{\bm{\phi}}(\bm{\theta}\mid\bm{s})\) objective, destabilise gradients, and waste model capacity on regions of $\mathcal{S}$ irrelevant to the observation $\bm{s}_y$. A common pragmatic approach is to clip or discard extreme simulations \citep[e.g.,][]{de_santi_field-level_2025,shih_fast_2024}, but threshold choice is problem‑specific. We instead adopt a principled alternative, preconditioning on summaries as in \citet{wang_preconditioned_2024}, which concentrates simulation effort near \(\bm{s}_y\).

We define a data‑dependent weight function
\[
w_y \colon \mathcal{S} \rightarrow [0,\infty),
\]
parameterised by the observation; we write \(w_y(\bm{s})\) or \(w_y(\bm{s};\bm{s}_y)\) when we want to emphasise the dependence on \(\bm{s}_y\).
To balance the benefits of retaining already simulated training data against the cost of modelling irrelevant regions, we design \(w_y\) to be conservative, excising pathological extremes while preserving sufficient breadth to ensure robust generalisation.
We write
\[
p_{\text{train}}(\bm{\theta},\bm{s}) := r(\bm{\theta})p(\bm{s}\mid\bm{\theta}),\qquad
p_{\text{train}}(\bm{s}) := \int p_{\text{train}}(\bm{\theta},\bm{s})\,d\bm{\theta},
\]
for the baseline training design and define the reweighted design
\[
p_w(\bm{\theta},\bm{s}) \;\propto\; w_y(\bm{s})\,p_{\text{train}}(\bm{\theta},\bm{s}).
\]
We approximate this weighted risk via importance resampling.
Given a batch of \(N\) simulations \(\mathcal{D} = \{(\bm{\theta}_i, \bm{s}_i)\}_{i=1}^N\), we compute normalised weights
\(\tilde{w}_i \propto w_y(\bm{s}_i)\) and draw indices
\(I_1, \ldots, I_K \sim \mathrm{Categorical}(\tilde{\bm{w}})\) to form a resampled dataset.
We then minimise the empirical risk on this resample.

In \citet{wang_preconditioned_2024}, similar weighting is induced by a short SMC‑ABC run that accepts simulations whose discrepancy to \(\bm{s}_y\) falls below an adaptive threshold, and an unconditional density $q_{\bm{\phi}}(\bm{\theta})$ is fit to the ABC pilot as a proposal for sequential rounds. We extend this approach by training the conditional estimator $q_{\bm{\phi}}(\bm{\theta}\mid\bm{s})$ directly on the preconditioned resample, yielding a conditional posterior approximation without the intermediate unconditional fitting step.
Properties and consequences are developed in Section~\ref{sec:prnpe_theory}; concrete choices for obtaining \(w_y\) (forest-proximity and SMC‑ABC preconditioners) are given in Section~\ref{sec:prnpe_pipeline}.

\subsection{Theoretical properties of preconditioned NPE}\label{sec:prnpe_theory}
We assess the single-instance accuracy of an amortised estimator $q_{\bm{\phi}}(\bm{\theta}\mid\bm{s})$ at the observation $\bm{s}_y$.
Standard NPE obtains parameters $\hat{\bm{\phi}}$ by minimising the global population risk over a training design $p_{\text{train}}(\bm{\theta}, \bm{s})$:
\begin{equation}
\mathcal{L}(\bm{\phi}) := \mathbb{E}_{(\bm{\theta}, \bm{s}) \sim p_{\text{train}}} \big[ -\log q_{\bm{\phi}}(\bm{\theta} \mid \bm{s}) \big].
\end{equation}
However, our inferential goal is to minimise the local risk at the specific observation $\bm{s}_y$:
\begin{equation}
\mathcal{L}_{\bm{s}_y}(\bm{\phi}) := \mathbb{E}_{\bm{\theta} \sim \pi(\cdot\mid\bm{s}_y)} \big[ -\log q_{\bm{\phi}}(\bm{\theta} \mid \bm{s}_y) \big].
\end{equation}
Let
\[
A(\bm{s}_y) := \inf_{\bm{\phi}} \mathcal{L}_{\bm{s}_y}(\bm{\phi})
\]
denote the approximation floor (the best achievable single-instance log-loss within the variational family).
If $q_{\bm{\phi}}(\bm{\theta}\mid\bm{s}_y)$ can represent the true posterior exactly, then $A(\bm{s}_y)$ coincides with the entropy of $\pi(\bm{\theta}\mid\bm{s}_y)$; otherwise it includes an irreducible approximation bias.
We identify the amortisation gap \citep{cremer_inference_2018, zammit-mangion_neural_2025} as the excess risk incurred by minimising the global, rather than local, objective:
\begin{equation}
\Delta_{\mathrm{am}}(\bm{s}_y) := \mathcal{L}_{\bm{s}_y}(\hat{\bm{\phi}}) - A(\bm{s}_y).
\end{equation}
Since $\hat{\bm{\phi}}$ is tuned to minimise average loss under $p_{\text{train}}$, the magnitude of $\Delta_{\mathrm{am}}(\bm{s}_y)$ is driven by where $p_{\text{train}}(\bm{s})$ places mass in summary space. Broad priors that allocate many simulations to extreme prior-predictive summaries can enlarge the gap at $\bm{s}_y$ by diverting capacity away from its neighbourhood, even when $\bm{s}_y$ itself lies within the support.
For interpretation it is convenient to write
\begin{equation}
\Delta_{\mathrm{am}}(\bm{s}_y) = \Delta_{\mathrm{est}}(\bm{s}_y) + \Delta_{\mathrm{oos}}(\bm{s}_y),
\end{equation}
where $\Delta_{\mathrm{est}}(\bm{s}_y)$ captures estimation error governed by the training density, and $\Delta_{\mathrm{oos}}(\bm{s}_y) \ge 0$ represents extrapolation error caused by incompatible summaries lying outside the simulator's support.
This distinction mirrors the compatible and incompatible regimes in \citet{frazier_statistical_2024}, where only the former admits consistent posterior inference.
Preconditioning aims to reduce $\Delta_{\mathrm{est}}(\bm{s}_y)$ by concentrating training mass near the observed summary $\bm{s}_y$, but it cannot resolve incompatibility. Consequently, robust inference methods are required to mitigate $\Delta_{\mathrm{oos}}(\bm{s}_y) \ge 0$, often by mitigating the incompatible summaries and conducting inference on the compatible summaries (see Section~\ref{sec:prnpe_pipeline}), effectively bypassing the extrapolation penalty.

We now formalise the benefit of preconditioning.
Let $w_y\colon\mathcal{S}\to[0,\infty)$ be a data-dependent weight function and let $\bm{\phi}^{\star}_w$ denote the population minimiser of the weighted risk $\int \mathcal{L}(\bm{\phi};\bm{s})\,w_y(\bm{s})\,p_{\text{train}}(\bm{s})\,\mathrm{d}\bm{s}$; setting $w_y\equiv 1$ recovers the standard unweighted NPE minimiser.
The amortisation gap at $\bm{s}_y$ is then $\Delta_{\mathrm{am}}(\bm{s}_y) = \mathcal{L}_{\bm{s}_y}(\bm{\phi}^{\star}_w) - A(\bm{s}_y)$.
Under the regularity conditions detailed in Assumption~1 in Appendix~\ref{app:am_gap_proof}, which require a $\bm{\theta}$-dependent Lipschitz condition on $\bm{s}\mapsto\log q_{\bm{\phi}}(\bm{\theta}\mid\bm{s})$ and a H\"{o}lder smoothness condition on $\bm{s}\mapsto\pi(\bm{\theta}\mid\bm{s})$ with exponent $\kappa>1$, we obtain the following.
\begin{lemma}[Amortisation-gap bound]\label{lem:am_gap}
Under Assumption~1 in Appendix~\ref{app:am_gap_proof},
\begin{align}
    \Delta_{\mathrm{am}}(\bm{s}_y) \;\le\; 
    4\bar{C}_1\,\mathbb{E}_{p_{\text{train}}(\bm{s})}\big[w_y(\bm{s})\|\bm{s}-\bm{s}_y\|\big] 
    \;+\;& 2\bar{C}_2\sqrt{\mathbb{E}_{p_{\text{train}}(\bm{s})}\big[w_y(\bm{s})\|\bm{s}-\bm{s}_y\|^2\big]}\nonumber \\
    \;+\;& 2\bar{C}_3\,\mathbb{E}_{p_{\text{train}}(\bm{s})}\big[w_y(\bm{s})\|\bm{s}-\bm{s}_y\|^{\kappa}\big].
    \label{eq:gap_bound}
\end{align}
\end{lemma}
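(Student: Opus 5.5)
The plan is a standard excess-risk argument. We compare the local risk at $\bm{s}_y$ with the weighted global risk, using the Lipschitz and H\"older conditions of Assumption~1 to control the cost of moving from $\bm{s}$ to $\bm{s}_y$. Normalise $w_y$ so that $\int w_y\,p_{\text{train}}(\bm{s})\,\mathrm{d}\bm{s}=1$, so that $p_w(\bm{s})=w_y(\bm{s})p_{\text{train}}(\bm{s})$ is a probability density. Define the weighted risk
\[
\mathcal{L}_w(\bm{\phi})=\int \mathcal{L}(\bm{\phi};\bm{s})\,p_w(\bm{s})\,\mathrm{d}\bm{s},
\qquad
\mathcal{L}(\bm{\phi};\bm{s})=\mathbb{E}_{\pi(\bm{\theta}\mid\bm{s})}\bigl[-\log q_{\bm{\phi}}(\bm{\theta}\mid\bm{s})\bigr].
\]
Let $\bm{\phi}_y$ be a (near-)minimiser of $\mathcal{L}_{\bm{s}_y}$. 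Then we decompose
\[
\Delta_{\mathrm{am}}(\bm{s}_y)=\bigl[\mathcal{L}_{\bm{s}_y}(\bm{\phi}^\star_w)-\mathcal{L}_w(\bm{\phi}^\star_w)\bigr]+\bigl[\mathcal{L}_w(\bm{\phi}^\star_w)-\mathcal{L}_w(\bm{\phi}_y)\bigr]+\bigl[\mathcal{L}_w(\bm{\phi}_y)-\mathcal{L}_{\bm{s}_y}(\bm{\phi}_y)\bigr].
\]
The middle bracket is $\le 0$ by optimality of $\bm{\phi}^\star_w$. It therefore suffices to show
\[
\sup_{\bm{\phi}}\bigl|\mathcal{L}_w(\bm{\phi})-\mathcal{L}_{\bm{s}_y}(\bm{\phi})\bigr|\le 2\bar C_1 M_1+\bar C_2\sqrt{M_2}+\bar C_3 M_\kappa,
\]
where $M_p=\mathbb{E}_{p_{\text{train}}}\bigl[w_y\|\bm{s}-\bm{s}_y\|^p\bigr]$. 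The factor of two in the lemma then comes from the two outer brackets.

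To control $\mathcal{L}(\bm{\phi};\bm{s})-\mathcal{L}(\bm{\phi};\bm{s}_y)$ pointwise, split it into two pieces:
\[
\mathbb{E}_{\pi(\cdot\mid\bm{s})}\bigl[\log q_{\bm{\phi}}(\bm{\theta}\mid\bm{s}_y)-\log q_{\bm{\phi}}(\bm{\theta}\mid\bm{s})\bigr]
\;+\;
\int\bigl(\pi(\bm{\theta}\mid\bm{s}_y)-\pi(\bm{\theta}\mid\bm{s})\bigr)\log q_{\bm{\phi}}(\bm{\theta}\mid\bm{s}_y)\,\mathrm{d}\bm{\theta}.
\]

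The first piece is bounded using the $\bm{\theta}$-dependent Lipschitz constant $L(\bm{\theta})$ of $\bm{s}\mapsto\log q_{\bm{\phi}}$. This gives at most $\mathbb{E}_{\pi(\cdot\mid\bm{s})}[L(\bm{\theta})]\,\|\bm{s}-\bm{s}_y\|$. I expect Assumption~1 to bound $\mathbb{E}_{\pi(\cdot\mid\bm{s})}L$ uniformly by a constant. Part of this enters with constant $\bar C_1$, and a second application, replacing $\pi(\cdot\mid\bm{s})$ by $\pi(\cdot\mid\bm{s}_y)$, contributes another $\bar C_1$. This accounts for the $4\bar C_1=2\cdot 2\bar C_1$.

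The second piece is handled by expanding $\pi(\bm{\theta}\mid\bm{s})$ around $\bm{s}_y$. Since $\kappa>1$, H\"older smoothness gives a first-order term linear in $\bm{s}-\bm{s}_y$ plus a remainder of order $\|\bm{s}-\bm{s}_y\|^\kappa$, weighted by the constant $\bar C_3$ after integrating against $|\log q_{\bm{\phi}}|$. The remainder integrates directly to $\bar C_3M_\kappa$.

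For the linear term I would not use a first-moment bound directly, since a Cauchy--Schwarz step appears to be what produces $\sqrt{M_2}$. Apply Cauchy--Schwarz under $p_w$:
\[
\int \langle g(\bm{\phi}),\bm{s}-\bm{s}_y\rangle\,p_w\,\mathrm{d}\bm{s}\le \|g\|\sqrt{M_2}.
\]
Here $g$ is the gradient of $\bm{s}\mapsto\int\pi(\bm{\theta}\mid\bm{s})\log q_{\bm{\phi}}(\bm{\theta}\mid\bm{s}_y)\,\mathrm{d}\bm{\theta}$ at $\bm{s}_y$, and $\|g\|\le\bar C_2$ by assumption. Note that Jensen's inequality also gives $M_1\le\sqrt{M_2}$, so either form is valid. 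Integrating all pointwise bounds against $p_w$ then yields the moment expressions in the statement.

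The main obstacle is uniformity in $\bm{\phi}$. The constants must control $\log q_{\bm{\phi}}(\bm{\theta}\mid\bm{s}_y)$, which can be unbounded in $\bm{\theta}$, together with $L(\bm{\theta})$, under both $\pi(\cdot\mid\bm{s})$ and $\pi(\cdot\mid\bm{s}_y)$, uniformly over $\bm{\phi}$ in a set containing both $\bm{\phi}^\star_w$ and $\bm{\phi}_y$. My first attempt is to read $\bar C_1,\bar C_2,\bar C_3$ as suprema over $\bm{\phi}\in\Phi$ of the relevant integrated moments, which I expect is how Assumption~1 is phrased. If it is not, I would fall back to restricting $\Phi$ to a compact set on which these suprema are finite. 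If $A(\bm{s}_y)$ is not attained, replace $\bm{\phi}_y$ by an $\eta$-minimiser and let $\eta\to0$.
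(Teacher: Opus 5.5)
Your skeleton is the paper's. You use the same three brackets (the appendix's $A$, $B$, $C$), get $B\le 0$ by optimality of $\bm{\phi}^\star_w$, and take the overall factor $2$ from the two outer brackets. Your pointwise split of $\mathcal{L}(\bm{\phi};\bm{s})-\mathcal{L}(\bm{\phi};\bm{s}_y)$ even coincides, up to sign, with the paper's $A_{2.3}+A_{2.1}$; the paper's other two terms, $A_1$ and $A_{2.2}$, are exact negatives of each other.

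The gap is in bounding your two pieces: you invoke controls that Assumption~1 does not provide and attach the constants to the wrong pieces. Your Lipschitz piece is averaged over $\pi(\cdot\mid\bm{s})$, and Assumption~1 gives no uniform bound on $\int C_1(\bm{\theta},\bm{\phi})\,\pi(\bm{\theta}\mid\bm{s})\,\mathrm{d}\bm{\theta}$. The constant $\bar{C}_1$ integrates $C_1$ against $\pi(\cdot\mid\bm{s}_y)$ only. Swapping $\pi(\cdot\mid\bm{s})$ for $\pi(\cdot\mid\bm{s}_y)$ would need a H\"older bound weighted by $C_1$, whereas part (ii) only controls that difference against $|\log q_{\bm{\phi}}(\bm{\theta}\mid\bm{s}_y)|$. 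So your $2\bar{C}_1$ for this piece is unjustified. What works is the Cauchy--Schwarz step you reserved for the other piece, applied here under the weighted joint using $\pi(\bm{\theta}\mid\bm{s})\,p_{\text{train}}(\bm{s})=p_{\text{train}}(\bm{\theta},\bm{s})$:
\[
\int C_1(\bm{\theta},\bm{\phi})\,\|\bm{s}-\bm{s}_y\|\,w_y(\bm{s})\,p_{\text{train}}(\bm{\theta},\bm{s})\,\mathrm{d}\bm{\theta}\,\mathrm{d}\bm{s}\;\le\;\bar{C}_2\sqrt{M_2}.
\]
This is exactly how $\bar{C}_2$ is defined: the weighted $L^2$ norm of $C_1$, not a bound on a gradient.

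Second, Assumption~1(ii) is not a first-order expansion with a H\"older remainder. It states directly that $|\pi(\bm{\theta}\mid\bm{s}_y)-\pi(\bm{\theta}\mid\bm{s})|\le f(\bm{\theta}\mid\bm{s}_y)\|\bm{s}-\bm{s}_y\|^{\kappa}$. Your second piece is therefore at most $\bar{C}_3\|\bm{s}-\bm{s}_y\|^{\kappa}$, with no linear term. Under your reading a gradient term $g$ would appear, and nothing in Assumption~1 bounds it.

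With both fixes, each outer bracket is at most $\bar{C}_2\sqrt{M_2}+\bar{C}_3M_{\kappa}$, and the lemma follows \emph{a fortiori}. The $4\bar{C}_1M_1$ term is slack: in the paper it comes from bounding the cancelling pair $A_1$, $A_{2.2}$ separately, so do not try to manufacture it.

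Finally, your uniformity-in-$\bm{\phi}$ obstacle is self-inflicted. The argument never needs $\sup_{\bm{\phi}}$, only the evaluations at $\bm{\phi}^\star_w$ and $\hat{\bm{\phi}}(\bm{s}_y)$. That is exactly where Assumption~1 requires $\bar{C}_1,\bar{C}_2,\bar{C}_3$ to be finite.
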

In the compatible regime where $\bm{s}_y$ lies in the effective support of $p_{\text{train}}(\bm{s})$, $\Delta_{\mathrm{oos}}(\bm{s}_y)=0$ and the bound applies directly to $\Delta_{\mathrm{am}}(\bm{s}_y)$. In general we can view \eqref{eq:gap_bound} as a bound on the on-support component $\Delta_{\mathrm{est}}(\bm{s}_y)$, with any extrapolation error $\Delta_{\mathrm{oos}}(\bm{s}_y)\ge 0$ left uncontrolled.
Here $\bar{C}_1$ controls the sensitivity of the pointwise loss to perturbations in $\bm{s}$ (via a $\bm{\theta}$-dependent Lipschitz constant integrated against $\pi(\bm{\theta}\mid\bm{s}_y)$), $\bar{C}_2$ captures a second-moment contribution through the weighted joint distribution $p_{\text{train}}(\bm{\theta},\bm{s})$, and $\bar{C}_3$ quantifies the regularity of the posterior density $\pi(\bm{\theta}\mid\bm{s})$ as a function of $\bm{s}$.
The three terms control the weighted first moment, weighted second moment, and weighted $\kappa$th moment of the distance $\|\bm{s}-\bm{s}_y\|$ under $p_{\text{train}}$.
All three are simultaneously reduced by concentrating $w_y$ near $\bm{s}_y$.
Indeed, if $w_y(\bm{s})\le C'\,\mathbb{I}\{\|\bm{s}-\bm{s}_y\|\le\epsilon\}$, the bound simplifies to $O(\epsilon\vee\epsilon^{\kappa})$ (Corollary~1 in Appendix~\ref{app:am_gap_proof}).

The first term generalises the transport cost of the design: when $w_y\equiv 1$,
\[
\mathbb{E}_{p_{\text{train}}(\bm{s})}\big[\|\bm{s}-\bm{s}_y\|\big]
= W_1\big(p_{\text{train}}(\bm{s}),\delta_{\bm{s}_y}\big)
\]
under the ground metric induced by $\|\cdot\|$.
When $p_{\text{train}}(\bm{s})$ is diffuse (for example, under a broad prior or a heavy-tailed simulator), this Wasserstein distance is large and gradients from distant, irrelevant simulations can dominate the loss.
The additional second-moment and H\"{o}lder terms in~\eqref{eq:gap_bound} reflect finer structure: the $\bar{C}_2$ term arises from variation in the conditional $\pi(\bm{\theta}\mid\bm{s})$ across training summaries (entering through the joint distribution rather than the marginal alone), while the $\bar{C}_3$ term penalises irregularity of the posterior as a function of the conditioning variable.
We view \eqref{eq:gap_bound} as a design-level guideline: the weighted moments of $\|\bm{s}-\bm{s}_y\|$ are explicit, geometry-driven contributions to the on-support component $\Delta_{\mathrm{est}}(\bm{s}_y)$ that can be reduced by concentrating training mass near $\bm{s}_y$.
We note that the constants $\bar{C}_1$, $\bar{C}_2$, $\bar{C}_3$ depend implicitly on the data size $n$ (through the posterior $\pi(\bm{\theta}\mid\bm{s})$), and are therefore best interpreted at a fixed sample size.
In incompatible regimes where $\bm{s}_y$ lies far outside the support of $p_{\text{train}}(\bm{s})$, the bound becomes uninformative.

This geometric insight directly motivates preconditioning.
By replacing the unweighted design with a weighted risk using $w_y(\bm{s})$ concentrated near $\bm{s}_y$, we shrink all three terms in~\eqref{eq:gap_bound} and improve the conditioning of the learning problem.
We note that $\bm{\phi}^{\star}_w$ minimises the weighted loss $\mathbb{E}_{(\bm{\theta},\bm{s})\sim p_{\text{train}}}[-w_y(\bm{s})\log q_{\bm{\phi}}(\bm{\theta}\mid\bm{s})]$, where the expectation is taken under the original training design $p_{\text{train}}$; importance resampling provides a finite-sample approximation to this weighted objective.
However, preconditioning cannot create support where $p(\bm{s}\mid\bm{\theta})=0$. If $\bm{s}_y\notin\operatorname{supp}(p_{\text{train}}(\bm{s}))$, the analysis remains extrapolative and the bound is uninformative, necessitating the robust inference stage described in Section~\ref{sec:prnpe_pipeline}.

An important requirement for the preconditioning schemes we develop is that they concentrate training mass near $\bm{s}_y$ without biasing the conditional relationship between parameters and summaries. We say that a weighting scheme is conditionally invariant if the reweighted design
\(p_w(\bm{\theta},\bm{s})\) satisfies
\[
p_w(\bm{\theta}\mid\bm{s}) = p_{\text{train}}(\bm{\theta}\mid\bm{s})
\quad\text{for all }\bm{s}\text{ with }w_y(\bm{s})>0.
\]
This holds whenever the weight depends only on summaries. Let
\(w_y\colon\mathcal{S}\to[0,\infty)\) and define the reweighted joint
\[
p_w(\bm{\theta},\bm{s}) \;\propto\; w_y(\bm{s})\,p_{\text{train}}(\bm{\theta},\bm{s}).
\]
For any \(\bm{s}\) with \(w_y(\bm{s})>0\),
\[
p_w(\bm{\theta}\mid\bm{s})
=\frac{w_y(\bm{s})\,p_{\text{train}}(\bm{\theta},\bm{s})}{\int w_y(\bm{s})\,p_{\text{train}}(\bm{\theta}',\bm{s})\,d\bm{\theta}'}
=\frac{w_y(\bm{s})\,p_{\text{train}}(\bm{\theta},\bm{s})}{w_y(\bm{s})\,p_{\text{train}}(\bm{s})}
=p_{\text{train}}(\bm{\theta}\mid\bm{s}).
\]
If \(w_y\) is indicator-valued (i.e.\ \(w_y(\bm{s})\in\{0,1\}\) for all \(\bm{s}\)) we simply filter simulations: the conditional \(p_w(\bm{\theta}\mid\bm{s})\) is unchanged on kept summaries with \(w_y(\bm{s})=1\) and undefined on discarded summaries with \(w_y(\bm{s})=0\). Since \(w_y(\bm{s}_y)>0\) by construction, the conditional target at \(\bm{s}_y\) is preserved.


Because the weights $w_y(\bm{s})$ act only in summary space, the amortisation-gap bound in~\eqref{eq:gap_bound} applies directly to the preconditioned design.
Concentrating $w_y$ near $\bm{s}_y$ tightens the weighted moments and typically reduces loss variance and gradient dispersion in the neighbourhood that matters.

Many practical preconditioning schemes fit this summary-only template. Random forest (RF) preconditioning defines weights
\(w_y(\bm{s}) = W^{\mathrm{RF}}(\bm{s};\bm{s}_y)\) via proximity in trees built on \(\bm{s}\); leaf assignments depend only on $\bm{s}$, so RF reweighting multiplies \(p_{\text{train}}(\bm{\theta},\bm{s})\) by a function of \(\bm{s}\) alone and leaves \(p_{\text{train}}(\bm{\theta}\mid\bm{s})\) invariant. Replenishment SMC-ABC generates particles targeting a sequence of joint distributions
\[
\pi_t(\bm{\theta},\bm{s}) \;\propto\; \pi(\bm{\theta})p(\bm{s}\mid\bm{\theta})K_{\epsilon_t}\big(\rho(\bm{s},\bm{s}_y)\big),
\]
in which the prior predictive is modulated by an $S$-only kernel. The rejuvenation kernels are designed to leave $\pi_t$ invariant, so the final population at tolerance $\epsilon_T$ approximates draws from a joint of the same form with $K_{\epsilon_T}$. 

This conditional invariance is specific to NPE. With NPE we model the posterior $q_{\bm{\phi}}(\bm{\theta}\mid\bm{s})$, and summary-only preconditioning leaves the target conditional $p(\bm{\theta}\mid\bm{s})$ unchanged on the kept support, so no correction is required. In contrast, neural likelihood estimation (NLE) models $q(\bm{s}\mid\bm{\theta})$. Reweighting the joint by \(w(\bm{s})\) changes the likelihood to
\[
p_w(\bm{s}\mid\bm{\theta}) \;\propto\; w(\bm{s})\,p(\bm{s}\mid\bm{\theta}),
\]
and an NLE model trained on the reweighted design would converge to this tilted likelihood unless one performs explicit correction (e.g.\ via importance weighting or re-simulation at evaluation time).

Summary-based preconditioning remains well defined even when the simulator or prior are misspecified. Because $S$-only weights $w_y(\bm{s})$ leave the model-implied conditional $\pi(\bm{\theta}\mid\bm{s})$ unchanged wherever $w_y(\bm{s})>0$, preconditioning does not introduce additional bias into the conditional target; it simply restricts attention to a data-dependent neighbourhood of $\bm{s}_y$. In random-forest preconditioning, tree-induced proximities act like an adaptive kernel in $\mathcal{S}$: simulations whose summaries fall in the same leaves as $\bm{s}_y$ receive large weights, while irrelevant regions are down-weighted. In SMC-ABC, the evolving particle system targets the standard ABC posterior, progressively pruning particles with large discrepancies $\rho(\bm{s},\bm{s}_y)$ so that mass concentrates near pseudo-true parameters whose simulated summaries resemble $\bm{s}_y$ \citep{frazier_model_2020}.

These schemes therefore direct computation towards a pseudo-true neighbourhood supported by the observed summaries and reduce the extrapolation risk associated with global regression adjustment. However, preconditioning alone cannot eliminate the extrapolation error $\Delta_{\mathrm{oos}}(\bm{s}_y)$: if $\bm{s}_y$ lies outside the effective support of $p(\bm{s}\mid\bm{\theta})$, the analysis remains extrapolative and inference is inherently unreliable. In our pipeline this motivates the use of robust NPE (RNPE) in addition to preconditioning, explicitly accounting for residual mismatch between the preconditioned training distribution and the observation.

\subsection{Preconditioned robust neural posterior estimation}\label{sec:prnpe_pipeline}
To address the dual challenges of poor prior-predictive behaviour and incompatible summaries, we couple preconditioning with misspecification-robust SBI. We propose preconditioned RNPE (PRNPE; Algorithm~\ref{alg:prnpe}).
Given a prior-predictive training set $\{(\bm{\theta}_i, \bm{s}_i)\}_{i=1}^N$, we compute normalised weights $\tilde{w}_i \propto w_y(\bm{s}_i)$ to concentrate computation near the observation.
We consider two choices for $w_y$ to filter out large-discrepancy simulations: (i) forest-proximities, which induce a tolerance-free neighbourhood around $\bm{s}_y$, and (ii) a short SMC-ABC warm-up \citep{wang_preconditioned_2024}.
The robust estimator is then trained on the resampled pairs following preconditioning.

\begin{algorithm}[ht]
\caption{PRNPE: preconditioned robust neural posterior estimation}
\label{alg:prnpe}
\begin{algorithmic}[1]
\Require Observed summary $\bm{s}_y$; prior $\pi(\bm{\theta})$; simulator $P_{\bm{\theta}}$; summary map $S$; simulation budget $N$; posterior sample size $M$; preconditioning weight function $w_y(\cdot)$; spike-and-slab hyperparameters $(\sigma_{\mathrm{spike}},\,\sigma_{\mathrm{slab}},\,\gamma)$.
\Ensure Posterior draws $\{\tilde{\bm{\theta}}_m\}_{m=1}^M$.
\State Draw $\bm{\theta}_i \sim \pi(\bm{\theta})$, simulate $\bm{x}_i \sim P_{\bm{\theta}_i}$, and set $\bm{s}_i \gets S(\bm{x}_i)$ for $i=1,\ldots,N$.
\State Compute weights $w_i \gets w_y(\bm{s}_i)$ via chosen preconditioning approach.
\State Compute weighted mean $\hat{\bm{\mu}}$ and standard deviation $\hat{\bm{\sigma}}$ from $\{(\bm{s}_i, \tilde{w}_i)\}_{i=1}^N$; standardise $\bm{s}_i \gets (\bm{s}_i - \hat{\bm{\mu}})/\hat{\bm{\sigma}}$ and $\bm{s}_y \gets (\bm{s}_y - \hat{\bm{\mu}})/\hat{\bm{\sigma}}$.
\State Normalise $\tilde w_i \gets w_i / \sum_{j=1}^N w_j$.
\State Fit conditional estimator $\hat{\bm{\phi}} \gets \arg\min_{\bm{\phi}} \sum_{i=1}^N \tilde w_i \big[-\log q_{\bm{\phi}}(\bm{\theta}_i\mid \bm{s}_i)\big]$.
\State Fit marginal density $\hat{\bm{\psi}} \gets \arg\max_{\bm{\psi}} \sum_{i=1}^N \tilde w_i \log h_{\bm{\psi}}(\bm{s}_i)$.
\State Define the spike-and-slab error model componentwise: $p(\bm{s}_y \mid \bm{s}) = \prod_{k=1}^{d_s}\big[(1-\gamma)\,\mathcal{N}(s_{y,k} \mid s_k,\,\sigma_{\mathrm{spike}}^2) + \gamma\,\mathcal{N}(s_{y,k} \mid s_k,\,\sigma_{\mathrm{slab}}^2)\big]$. Hyperparameter settings are given in Appendix~\ref{sec:hyperparameters}.
\For{$m=1,\ldots,M$}
\State Sample latent summary $\tilde{\bm{s}}_m \sim \widehat{p}(\bm{s}\mid\bm{s}_y) \propto p(\bm{s}_y\mid\bm{s})\, h_{\hat{\bm{\psi}}}(\bm{s})$ via NUTS; see Appendix~\ref{sec:hyperparameters} for sampler settings.
\State Sample parameter $\tilde{\bm{\theta}}_m \sim q_{\hat{\bm{\phi}}}(\bm{\theta}\mid \tilde{\bm{s}}_m)$.
\EndFor
\State \Return $\{\tilde{\bm{\theta}}_m\}_{m=1}^M$. 
\end{algorithmic}
\end{algorithm}

We adopt a forest-proximity preconditioning approach based on the ABC-RF framework \citep{raynal_abc_2019}. We fit a regression forest for each parameter component $\theta_j$ and define weights via leaf co-occurrence with the observation $\bm{s}_y$ \citep{meinshausen_quantile_2006},
 \[
w_y(\bm{s}_i) = \sum_{j=1}^{d_{\bm{\theta}}}\sum_{b=1}^B \frac{\mathbb{I}\{\bm{s}_i\in L_{jb}(\bm{s}_y)\}}{|L_{jb}(\bm{s}_y)|}.
\]
Since $w_y$ depends solely on $\bm{s}$, conditional invariance is satisfied.
However, unlike standard random forests which subsample features to decorrelate trees and reduce variance \citep{breiman_random_2001}, we include all summary features at every split.
While this may increase the variance of the parameter predictions, for preconditioning our priority is to aggressively isolate extreme simulations: using all features ensures that any coordinate exhibiting a large deviation from $\bm{s}_y$ can immediately trigger a split, excluding that simulation from the leaf $L_{jb}(\bm{s}_y)$.
We further enforce a conservative weighting by constraining tree depth and requiring large minimum leaf sizes; this coarsens the partition, smoothing the weights and maintaining a high effective sample size ($N_{\mathrm{eff}}$).
The full procedure is given in Algorithm~\ref{alg:rf-precond}.
Specific hyperparameters are detailed in Appendix~\ref{sec:hyperparameters}.

As an alternative preconditioning approach, we employ the SMC-ABC sampler of \citet{drovandi_estimation_2011}. Following \citet{wang_preconditioned_2024}, we run a limited number of generations to discard simulations with large discrepancies $\rho(\bm{s}, \bm{s}_y)$. This avoids the high computational cost of a full SMC run, effectively filtering the training data via a hard tolerance threshold without attempting to approximate the small-$\epsilon$ posterior.
Unlike \citet{wang_preconditioned_2024}, we do not fit an unconditional density to the ABC samples, instead using them to directly train the conditional estimator $q_{\bm{\phi}}(\bm{\theta}\mid\bm{s})$.
Specific settings for the population size and tolerance schedule are provided in Appendix~\ref{sec:app_smcabc_background}.

\begin{algorithm}[H]
\caption{Forest-proximity preconditioning}
\label{alg:rf-precond}
\begin{algorithmic}[1]
\Require Simulation dataset $\mathcal{D}=\{(\bm{\theta}_i,\bm{s}_i)\}_{i=1}^N$; observed summary $\bm{s}_y$; number of trees per parameter $B$; sample size $M$.
\Ensure Normalised weights $\tilde{\bm{w}}$ and resampled dataset $\mathcal{D}^{\mathrm{rf}}$.
\State Train $d_{\bm{\theta}}$ independent regression forests (one per parameter $\theta_j$), each containing $B$ trees.
\State For every tree $b \in \{1,\dots,B\}$ in every forest $j \in \{1,\dots,d_{\bm{\theta}}\}$, identify the leaf $L_{jb}(\bm{s}_y)$ containing $\bm{s}_y$.
\State Compute raw weights for each simulation $i$:
\[ w_i = \sum_{j=1}^{d_{\bm{\theta}}} \sum_{b=1}^B \frac{\mathbb{I}\{\bm{s}_i \in L_{jb}(\bm{s}_y)\}}{|L_{jb}(\bm{s}_y)|}. \]
\State Normalise weights: $\tilde{w}_i \gets w_i / \sum_{k=1}^N w_k$.
\State Resample with replacement: Draw indices $k_1,\dots,k_M \stackrel{\text{i.i.d.}}{\sim}\mathrm{Categorical}(\tilde{\bm{w}})$.
\State Construct dataset $\mathcal{D}^{\mathrm{rf}}=\{(\bm{\theta}_{k_m},\bm{s}_{k_m})\}_{m=1}^M$.
\State \Return Weights $\tilde{\bm{w}}$ and dataset $\mathcal{D}^{\mathrm{rf}}$.
\end{algorithmic}
\end{algorithm}



RNPE constitutes the robust inference stage of the pipeline. While preconditioning localises training near $\bm{s}_y$, RNPE handles potential incompatibility by denoising the observation via marginalisation over latent denoised summaries.
We train both the conditional estimator $q_{\bm{\phi}}(\bm{\theta}\mid \bm{s})$ and the marginal density estimator $h_{\bm{\psi}}(\bm{s})$ on the weighted dataset $\{(\bm{\theta}_i, \bm{s}_i, \tilde{w}_i)\}_{i=1}^N$ by maximising their respective weighted log-likelihoods.
Here, $h_{\bm{\psi}}$ targets the marginal density of the reweighted design $p_w(\bm{s}) \propto \int w_y(\bm{s})\,p_{\text{train}}(\bm{\theta},\bm{s})\,\mathrm{d}\bm{\theta}$, where $w_y(\bm{s})$ is determined by the chosen preconditioning scheme (random forest or SMC-ABC).
At inference, we form the approximate posterior over denoised summaries,
\begin{equation*}
\widehat{p}(\bm{s}\mid \bm{s}_y) \propto p(\bm{s}_y\mid \bm{s})\, h_{\bm{\psi}}(\bm{s}),
\end{equation*}
where $p(\bm{s}_y \mid \bm{s})$ is the likelihood induced by the spike-and-slab error model. We draw samples $\tilde{\bm{s}}_m \sim \widehat{p}(\cdot \mid \bm{s}_y)$ for $m=1, \ldots, M$, following the MCMC procedure of \citet{ward_robust_2022}, and compute the final posterior approximation via the ensemble average
\begin{equation*}
\hat{\pi}(\bm{\theta}\mid \bm{s}_y) \approx \frac{1}{M}\sum_{m=1}^M q_{\bm{\phi}}(\bm{\theta}\mid \tilde{\bm{s}}_m).
\end{equation*}
This two-stage approach ensures that preconditioning focuses modelling capacity on the relevant summary space, while RNPE allows robust inference under model incompatibility.

Under correct specification, RNPE recovers the standard NPE posterior and exhibits consistent concentration around the true parameter.
Conceptually, RNPE encourages towards compatible summaries by construction: by conditioning on $\bm{s}_y$ and sampling latent summaries $\tilde{\bm{s}}$ from the simulator's support, it effectively projects the observation back onto the model manifold via coordinate-wise probabilistic shifts.
RNPE targets a convolved summary likelihood $p_{\tau}(\bm{s}_y \mid \bm{\theta}) = \int p_{\tau}(\bm{s}_y \mid \bm{s}) p(\bm{s} \mid \bm{\theta}) \, \mathrm{d}\bm{s}$, where $p_{\tau}(\bm{s}_y \mid \bm{s})$ represents the chosen error model (we use the spike-and-slab specification
of \citet{ward_robust_2022} throughout) parameterised by $\tau$. Consequently, the posterior concentrates at the parameter $\bm{\theta}_{\tau}^\star$ that minimises the Kullback-Leibler divergence from the true DGP to this convolved model.
In contrast, RNLE explicitly models the additive adjustment $\bm{\Gamma}$. The joint posterior concentrates on the manifold $\mathcal{M} = \{(\bm{\theta}, \bm{\Gamma}) : b(\bm{\theta}) + \bm{\Gamma} = b_{\star}\}$, effectively finding the projection of $b_{\star}$ onto the model family $\{b(\bm{\theta})\}$ that minimises the penalty induced by the prior.

\section{Examples}\label{sec:examples}

We evaluate our preconditioning strategies on three benchmarks: two synthetic tasks and one real-world application. These examples are selected to stress-test inference under incompatible summaries and heavy-tailed prior-predictive distributions.
Across all cases, combining preconditioning with robust NPE significantly improves estimation accuracy and calibration compared to existing methods.
In the synthetic experiments, we introduce misspecification while retaining the underlying generative structure for relevant summaries. Consequently, the target pseudo-truth $\bm{\theta}^{\star}$ is the parameter generating the compatible components of the observation.
We assess performance using estimation accuracy (Bias, RMSE), uncertainty quantification (empirical coverage of 95\% highest posterior density intervals; HPDI), and the log-density of the true parameter. Additionally, we evaluate the posterior-predictive fit by computing the distance between simulated and observed summaries. To ensure a fair comparison, all methods are restricted to a fixed budget of 20,000 model simulations per dataset.

\subsection{Contaminated Weibull}

We illustrate the dual challenges of extreme prior-predictive behaviour and summary incompatibility using a toy one-parameter Weibull model.
The assumed DGP is
\begin{equation*}
      x_i \overset{\text{i.i.d.}}{\sim} \textsf{Weibull}(k, \lambda=1), \quad i=1,\ldots,n, \quad n = 200,
\end{equation*}
with unknown shape $k>0$ and fixed scale.
We assign a weakly informative prior $k \sim \text{LogNormal}(1, 1)$.
Since low shape parameters induce heavy-tailed distributions, prior-predictive summaries (particularly the sample variance) span many orders of magnitude, creating a challenging dataset for training.
We consider three summary statistics: the sample mean, the sample variance, and the minimum.
To introduce incompatibility, the true DGP is a mixture of the Weibull distribution and a $5\%$ contamination from $\mathcal{N}(-1, 0.2)$.
This contamination generates negative values, rendering the minimum summary incompatible with the strictly positive support of the assumed model.
We define the target pseudo-truth $k^{\star} \approx 0.789$ as the parameter that minimises the Euclidean distance between the expected model summaries and the expected contaminated summaries, restricting the objective to the compatible subset $(\bar{x}, s^2)$ (computed via grid search over analytic population moments).

Table~\ref{tab:contaminated_weibull_table} summarises the estimation accuracy and predictive performance for the contaminated Weibull example, averaged across 100 paired Monte Carlo replicates.
The forest-proximity approach yields slightly stronger accuracy and predictive fit, with lower RMSE (0.07 vs 0.09) and a better median posterior predictive distance (-0.53 vs -0.62). Conversely, the SMC-ABC variant attains near-nominal coverage (0.98 vs 0.85), highlighting a mild trade-off between sharpness and calibration. In contrast, the baselines fail to cope with the combination of incompatibility and extreme prior-predictive draws despite utilising a comparable computational budget. Standard NPE and RNPE exhibit negligible coverage and poor predictive performance, while preconditioned NPE (PNPE) without the denoising step suffers from higher estimation error and poorer calibration than the robust counterparts. These results demonstrate that combining preconditioning with robustness is essential to recover accuracy and predictive fit while maintaining reliable calibration.

\begin{table}[ht]
\centering
\small
\caption{
Performance comparison for the contaminated Weibull example ($n=200$). Values denote means (standard deviations) across 100 paired Monte Carlo replicates.
Metrics include estimation error (Bias, RMSE), uncertainty calibration (empirical coverage of $k^\star$ by 95\% HPDIs), and predictive fit.
Predictive fit is reported as the posterior predictive distance (PPD), defined as the median log-Euclidean distance between posterior simulations and the observation. Arrows indicate the desirable direction.
}
\label{tab:contaminated_weibull_table}

\begin{tabular}{
l
l
l
l
l                    
l                    
}
\toprule
& \multicolumn{1}{c}{Bias $\downarrow$}
& \multicolumn{1}{c}{RMSE $\downarrow$}
& \multicolumn{1}{c}{Coverage}
& \multicolumn{1}{c}{$\log$ PPD (median) $\downarrow$} \\
\midrule

\textsc{prnpe (smc abc)}        & \textbf{0.05 (0.04)} & 0.09 (0.02) & \textbf{0.98}  & -0.53 (0.32) \\
\textsc{prnpe (forest-proximity)}  &   \textbf{0.05 (0.04)} & \textbf{0.07 (0.03)}  & 0.85          & \textbf{-0.62 (0.29) } \\
\textsc{npe}          & 1.01 (2.28)  &  1.01 (2.28)    & 0.00          & 92.30 (131.53) \\
\textsc{rnpe}         &  4.60 (12.99)  & 5.29 (14.19)         & 0.30          &  1.10 (74.89) \\
\textsc{pnpe (smc abc)}         &  0.38 (0.26) & 0.46 (0.27) & 0.17          & 3.27 (2.32) \\
\textsc{pnpe (forest-proximity)}   &  0.33 (0.23) & 0.40 (0.22)          & 0.19          & 0.85 (1.99) \\
\bottomrule
\end{tabular}
\end{table}

Figure~\ref{fig:weibull_posteriors} displays the marginal posteriors for the Weibull shape $k$ obtained via PRNPE for a particular dataset. Under both preconditioning schemes, the posterior mass concentrates around the pseudo-true target $k^{\star}$, avoiding the implausible regions induced by the heavy-tailed prior.
Figure~\ref{fig:weibull_ppc} presents posterior predictive checks for the compatible summaries $(\bar{x}, s^2)$.
The predictive distributions concentrate tightly around the observed statistics, with the observation falling well within the 50\% and 90\% HPD regions.
Crucially, unlike the extreme prior-predictive distribution, the posterior simulations remain within a realistic range of the summary space.

\begin{figure}
    \centering
    \includegraphics[width=0.99\linewidth]{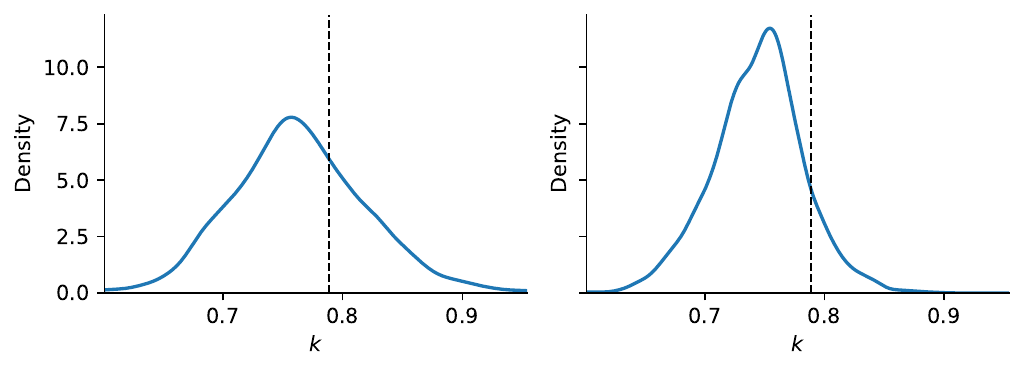}
    \caption{Posterior densities for the Weibull shape $k$ using PRNPE with SMC-ABC (left) and forest-proximity (right) preconditioning. The vertical dashed line indicates the target pseudo-truth $k^{\star}$.}
    \label{fig:weibull_posteriors}
\end{figure}

\begin{figure}
    \centering
    \includegraphics[width=0.99\linewidth]{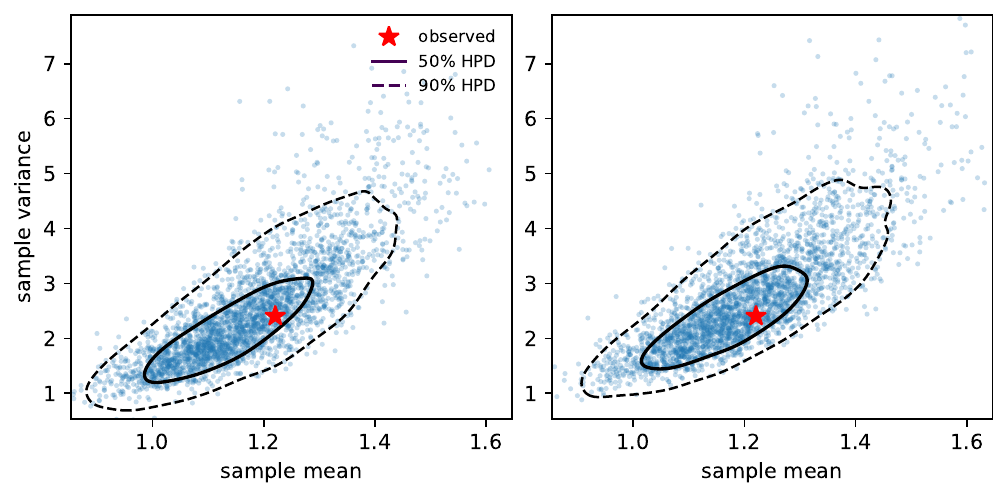}
    \caption{Posterior predictive distributions for the compatible summaries $(\bar{x}, s^2)$ in the contaminated Weibull example using PRNPE with SMC-ABC (left) and forest-proximity (right) preconditioning. Scatter points represent 2000 draws; contours indicate the 50\% (solid) and 90\% (dashed) highest-density regions. The red star denotes the observed summary $\bm{s}_y$.}
    \label{fig:weibull_ppc}
\end{figure}

\subsection{Sparse vector autoregression}
We consider a sparse vector autoregression (SVAR) model, previously used to benchmark SBI methods \citep{drovandi_improving_2024}, which exhibits extreme prior-predictive behaviour that motivates preconditioning \citep{wang_preconditioned_2024}.
We set the dimension $d=6$ and define the transition matrix $A \in \R^{d \times d}$ with diagonal elements fixed at $-0.1$. Non-zero off-diagonals are restricted to three disjoint pairs $(1,2), (3,4), (5,6)$, where $A_{i,j}$ and $A_{j,i}$ are estimated separately.
The assumed model is
\begin{equation*}
\bm{y}_t = A\bm{y}_{t-1} + \bm{\xi}_t, \quad t=1,\ldots,T, \quad T=1000, \quad \bm{\xi}_t \sim \mathcal{N}(0, \sigma^2 I_d).
\end{equation*}
We place independent uniform priors $A_{i, j} \sim \mathcal{U}(-1, 1)$ on the six active off-diagonals and $\sigma \sim \mathcal{U}(0,1)$ on the noise scale.
For each run, we generate an observed sequence using $\bm{\theta} = (0.579, -0.143, 0.836, 0.745, -0.660, -0.254, 0.1)$.
We compute eight summary statistics: the six lag-1 cross-covariances $1/T \sum_{t=2}^T (y_{i,t} - \bar{y}_i) (y_{j,t-1} - \bar{y}_j)$ corresponding to the active pairs, the pooled standard deviation, and the global mean.
To introduce misspecification, we add a constant drift $\mu$ to each component with $\mu=0.05$ to the generative process.
While this drift renders the mean summary incompatible with the zero-mean assumption of the model, the centring in the covariance and standard deviation calculations renders the remaining seven summaries compatible. Consequently, the target pseudo-truth matches the data-generating parameters.

Table~\ref{tab:svar_table} summarises performance for the SVAR model under drift misspecification, averaged across 100 paired Monte Carlo replicates.
PRNPE (SMC-ABC) performs best overall, achieving very accurate inference on $\sigma$ (Bias 0.00, RMSE 0.00, Coverage 1.00) and the strongest predictive fit (median PPD -4.15). The forest-proximity variant is nearly as accurate (Bias 0.01, RMSE 0.02) and maintains near-nominal coverage (0.98) with a comparable predictive fit (-4.06). While standard NPE attains the highest log-probability at the pseudo-truth, it suffers from large estimation errors and weak predictive fit. Similarly, RNPE and the non-robust preconditioned baselines exhibit higher errors, weaker predictive fit, and poor calibration.

\begin{table}[ht]
\centering
\small
\caption{Performance comparison for the sparse VAR(1) example ($d=6, T=1000$). Values denote means (standard deviations) across 100 paired Monte Carlo replicates. Metrics include marginal estimation error for $\sigma$ (Bias, RMSE), uncertainty calibration (empirical coverage of $\sigma^\star$ by 95\% HPDIs), target fit (amortised log-density at the pseudo-truth $\bm{\theta}^\star$), and predictive fit (log-median Posterior Predictive Distance; PPD). The PPD is computed using only the seven compatible summaries (six lag-1 cross-covariances and the pooled s.d.). Arrows indicate the direction of improvement.}
\label{tab:svar_table}
\begin{tabular}{lrrrr}
\hline
& \multicolumn{3}{c}{Marginal $(\sigma)$} & \multicolumn{1}{c}{Model-wide} \\
Method & Bias $\downarrow$ & RMSE $\downarrow$ & Coverage &
$\log \mathrm{PPD}\, \text{(median)}\,\downarrow$ \\
\hline
\textsc{prnpe (smc abc)} & \textbf{0.00} (0.00) & \textbf{0.00} (0.00) & 1  & \textbf{-4.15 (0.17)} \\
\textsc{prnpe (forest-proximity)}       & 0.01 (0.02) & 0.02 (0.02) &  \textbf{0.98 }          & -4.06(7.12)\\
\textsc{npe}        & 0.40 (0.02) & 0.49 (0.01)          & 1   &  -0.24 (0.10)  \\
\textsc{rnpe}  & 0.38 (0.22)          &  0.48 (0.18)        & 0.76          &  0.44 (24.89)\\
\textsc{pnpe (smc abc)}         & 0.06 (0.08)          & 0.07 (0.08)     &  0.15           & 8.22 (85.38) \\
\textsc{pnpe (forest-proximity)}        & 0.10 (0.12)         & 0.14 (0.13)          & 0.50          & 86.59 (80.51) \\
\hline
\end{tabular}
\end{table}

Figure~\ref{fig:svar_sigma_posteriors} displays the marginal posteriors for the noise scale $\sigma$ under PRNPE with SMC-ABC and forest-proximity preconditioning. Both variants concentrate tightly near the data-generating value $\sigma^\star=0.10$.
Posterior predictive checks for the compatible pooled standard deviation (Figure~\ref{fig:svar_ppc}) show that the simulated statistics cluster around the observation, with the observed value falling within the interquartile range of the posterior predictive distributions.
These diagnostics corroborate the strong predictive performance reported in Table~\ref{tab:svar_table}.

\begin{figure}[htbp]
  \centering
  \begin{subfigure}[t]{0.49\textwidth}
    \centering
    \includegraphics[width=\linewidth]{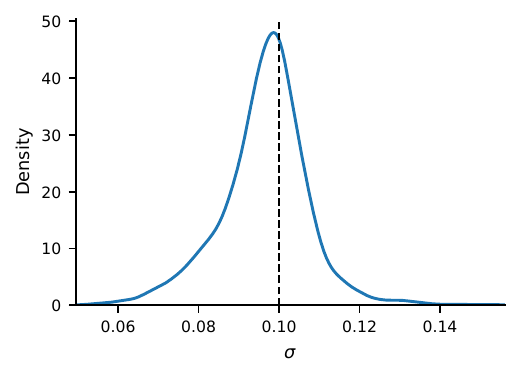}
    \caption{PRNPE (SMC-ABC)}
    \label{fig:svar_sigma_prnpe}
  \end{subfigure}
  \hfill
  \begin{subfigure}[t]{0.49\textwidth}
    \centering
    \includegraphics[width=\linewidth]{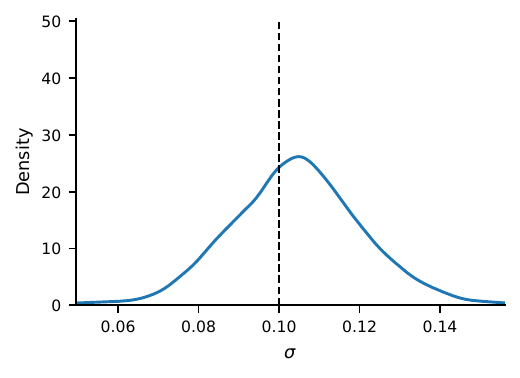}
    \caption{PRNPE (random forest)}
    \label{fig:svar_sigma_rfabc}
  \end{subfigure}
\caption{Marginal posterior densities for the noise scale $\sigma$ in the SVAR example using PRNPE with SMC ABC (left) and forest-proximity (right) preconditioning. The vertical dashed line indicates the target pseudo-truth $\sigma^\star$.}
  \label{fig:svar_sigma_posteriors}
\end{figure}

\begin{figure}
    \centering
    \includegraphics[width=0.5\linewidth]{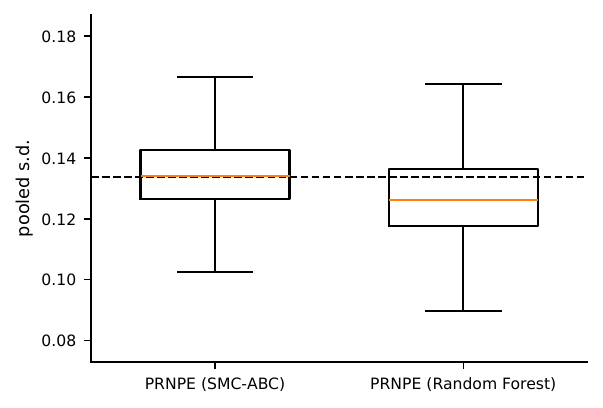}
\caption{Posterior predictive distributions for the pooled standard deviation in the SVAR example using PRNPE with SMC ABC (left) and forest-proximity (right) preconditioning. The dashed horizontal line indicates the observed summary statistic.}
    \label{fig:svar_ppc}
\end{figure}

\subsection{Real data: BVCBM}\label{sec:bvcbm}

We calibrate a biphasic Voronoi cell-based model (BVCBM) to four pancreatic tumour time series consisting of daily volume measurements ($\mathrm{mm}^3$) with durations $T\in\{19,26,32,32\}$ days \citep{wade_dual_2020}; we fit data from each mouse independently.
The model assumes a biphasic process with a growth rate switch at day $\tau$.
We estimate the reduced parameter vector $\bm{\theta}=(g_{\mathrm{age}}^{(1)}, \tau, g_{\mathrm{age}}^{(2)})$, where $g_{\mathrm{age}}$ denotes the minimum re-division time (in hours) and $\tau$ is measured in days.
To ensure identifiability given volume-only data, we fix the remaining parameters $(p_0,p_{\mathrm{psc}},d_{\max},p_{\text{age}})=(1.0,\,0,\,17.3,\,5)$.
Rather than reducing the data to lower-dimensional statistics, we utilise the full volume trajectories as summaries and compute discrepancies via Euclidean distance.

We assign weakly informative uniform priors: $g_{\mathrm{age}}^{(k)}\sim\mathcal{U}(2,\,24T{-}1)$ for $k\in\{1,2\}$ and $\tau\sim\mathcal{U}(2,\,T{-}1)$.
While physiologically plausible, these ranges are broad. Low $g_{\mathrm{age}}$ values induce rapid growth while high values yield near-stasis, and varying $\tau$ shifts the phase transition; consequently, the prior-predictive distribution covers a wide range of trajectory shapes.
Misspecification arises primarily from measurement noise and discrepancies in initial conditions.

Since the scientific object of interest is the tumour volume trajectory, we assess predictive fit using the PPD computed on the full time series. Table~\ref{tab:bvcbm_ppd_distance} shows that PRNPE (forest-proximity) achieves the strongest fit, attaining the lowest PPD on datasets D1--D3 (0.367, 0.300, 0.328) and tying for best on D4 (0.433). The SMC-ABC variant remains competitive but yields slightly higher distances. In contrast, non-robust baselines exhibit instability: NPE degrades on D3 and D4 (1.147, 1.201), while RNPE performs poorly on D2 and D3 (1.704, 1.472) despite being competitive on D1. The PNPE baselines are consistently outperformed, achieving parity only on D4. Overall, combining preconditioning with robustness yields the best posterior-predictive trajectory fit across the four datasets.

\begin{table}[t]
\centering
\caption{PPD for the four tumour-growth datasets. Values represent the median Euclidean distance between simulated and observed trajectories; lower values indicate better fit.}
\label{tab:bvcbm_ppd_distance}
\small
\setlength{\tabcolsep}{6pt}
\begin{tabular}{l c c c c}
\toprule
Method & {D1} & {D2} & {D3} & {D4} \\
\midrule
\textsc{prnpe (smc abc)} & 0.561 & 0.437 & 0.397 & 0.571 \\
\textsc{prnpe (forest-proximity)} & \textbf{0.367} & \textbf{0.300} & \textbf{0.328} & \textbf{0.433} \\
\textsc{npe} & 0.498 & 0.373 & 1.147 & 1.201 \\
\textsc{rnpe} & 0.370 & 1.704 & 1.472 & 0.478 \\
\textsc{pnpe (smc abc)} & 0.674 & 0.647 & 0.833 & 0.596 \\
\textsc{pnpe (forest-proximity)} & 0.414 & 1.113 & 0.352 & \textbf{ 0.433} \\
\bottomrule
\end{tabular}

\vspace{2pt}
\begin{minipage}{0.96\linewidth}\footnotesize
\end{minipage}
\end{table}

\section{Discussion}\label{sec:discussion}

Simulation-based Bayesian inference can be brittle when simulators behave poorly under broad priors or when models are misspecified. We address these dual failure modes with preconditioned robust neural posterior estimation (PRNPE).
While prior work has addressed either training stability via preconditioning in well-specified settings \citep{wang_preconditioned_2024} or incompatibility via robust adjustments \citep{ward_robust_2022, kelly_misspecification-robust_2024}, PRNPE addresses both. Further, our preconditioning simplifies the pipeline of \citet{wang_preconditioned_2024} by training the conditional estimator directly on the preconditioned samples rather than requiring an intermediate unconditional fitting step.
By preconditioning the training around the observed summary, via either our proposed forest-proximity approach or a pilot SMC-ABC run, and subsequently pairing this with a robust SBI method that can mitigate incompatible summaries, our approach focuses the neural network's model capacity on the relevant parameter region, even under incompatible summaries. Across synthetic studies and a real-data example, we found this combination yielded stable training and stronger posterior-predictive fit than baseline methods. 

The effectiveness of PRNPE stems from its ability to reduce the amortisation gap without biasing the inference. Standard amortised NPE minimises a global loss over the prior-predictive distribution; when this distribution spans orders of magnitude, the network wastes capacity modelling irrelevant regions, degrading accuracy at the specific observation.
Crucially, because our preconditioning relies solely on summaries, it leaves the conditional target invariant, avoiding the target-shift bias associated with regression-adjustment ABC. However, preconditioning alone cannot create support where the simulator has none. In cases of incompatibility, preconditioning filters towards the closest possible simulations under the assumed model, while the subsequent robust SBI method bridges the remaining gap between the simulated manifold and the incompatible observation. Furthermore, because robust methods like RNPE include explicit error terms (e.g., spike-and-slab indicators), they allow for model criticism, enabling the modeller to diagnose which summaries are driving the incompatibility, allowing for subsequent model refinement.

PRNPE offers a practical option for SBI in the early stages of a Bayesian workflow, where weakly informative priors are valuable but often result in extreme prior-predictive behaviour. While tightening the prior can mitigate extremes, it shifts the burden onto elicitation and requires costly retuning. In contrast, PRNPE stabilises training without redesigning the prior, maintaining model transparency. Compared to classical ABC, which handles extremes by rejection but requires prohibitive simulation budgets, PRNPE retains the efficiency of the core NPE architecture, requiring only a modest number of additional simulations for the preconditioning step. As a general guideline, we recommend preconditioning when prior-predictive plots span many orders of magnitude. We recommend defaulting to conservative weighting that preserves breadth, with the main benefit coming from removing implausible regions that degrade training without discarding simulations that may still be useful for generalisation.

We focused on single-round (amortised) NPE to isolate the quality of the initial approximation, but our results have direct implications for sequential methods. Sequential schemes rely on the first round to generate a valid proposal; if the initial posterior is poor due to extreme priors, the subsequent rounds may fail to converge. PRNPE provides the high-quality initial proposal to start sequential inference. Regarding the preconditioning implementations, we note a trade-off: the forest-proximity approach is computationally convenient as it runs on the initial prior-predictive draws without intermediate simulation loops, but it can produce spiky weights with low effective sample size (ESS). In our setup, we used all features at each split to ensure extreme summary dimensions were captured; while this reduced tree diversity and lowered ESS, performance remained strong in our examples.

Future research could extend this framework to ``semi-amortised'' inference across a dataset, redefining the preconditioning weight to cover the union of neighbourhoods around all observed summaries. This would focus neural capacity on the simulator's effective operational domain while avoiding the learning of conditional mappings in implausible prior-predictive regions. Additionally, to mitigate spiky weights in the forest-proximity approach, one could combine it with SMC-ABC as recently done by \citet{dinh_approximate_2025}, using the ABC component to filter extremes and enabling feature subsampling to restore tree decorrelation. Finally, while we utilised RNPE, alternative robust methods such as learnt robust summaries \citep{huang_learning_2023} or post-hoc optimal-transport calibration \citep{wehenkel_addressing_2025} could be substituted as the second stage.
Complementary strategies for improving amortisation quality include self-consistency regularisation \citep{schmitt_leveraging_2024}, which penalises violations of the marginal identity $\int q_{\bm{\phi}}(\bm{\theta}\mid\bm{s})\,p(\bm{s})\,\mathrm{d}\bm{s} = \pi(\bm{\theta})$, with recent work extending this idea to unlabelled observed data for robustness to distribution shift \citep{mishra_robust_2025}. Preconditioning addresses the same failure mode from the loss-design side, and one could combine it with self-consistency penalties.
Ultimately, preconditioning around the observation, paired with a robust SBI method, provides a practical approach for principled inference even under challenging misspecified simulators and wide priors.






\section*{Acknowledgements}
This work was supported by the Australian Government Research Training
Program (scholarship to RPK); the QUT Centre for Data Science (top-up
scholarship to RPK); the Australian Research Council under Future
Fellowship FT210100260 (to CD) and Discovery Early Career Researcher
Award DE250100396 (to DJW). CD and DJW also acknowledge support from
the ARC Centre of Excellence for the Mathematical Analysis of Cellular
Systems (MACSYS).
Computational resources and services were provided by the HPC and
Research Support Group, Queensland University of Technology.

\section*{Data availability}
The pancreatic tumour volume data used in Section~\ref{sec:bvcbm}
were originally collected by \citet{wade_dual_2020}. All remaining
data are synthetically generated. Code to reproduce all experiments
is available at
\url{https://github.com/RyanJafefKelly/preconditioned-npe-under-misspecification}.





\bibliography{references}

@article{wade_dual_2020,
author = {Wade, Samantha J. and Sahin, Zeliha and Piper, Ann-Katrin and Talebian, Sepehr and Aghmesheh, Morteza and Foroughi, Javad and Wallace, Gordon G. and Moulton, Simon E. and Vine, Kara L.},
title = {Dual delivery of Gemcitabine and Paclitaxel by wet-spun coaxial fibers induces pancreatic ductal Adenocarcinoma cell death, reduces tumor volume, and sensitizes cells to radiation},
journal = {Advanced Healthcare Materials},
volume = {9},
number = {21},
pages = {2001115},
doi = {https://doi.org/10.1002/adhm.202001115},
eprint = {https://advanced.onlinelibrary.wiley.com/doi/pdf/10.1002/adhm.202001115},
year = {2020}
}

@inproceedings{lueckmann_flexible_2017,
	title = {Flexible statistical inference for mechanistic models of neural dynamics},
	volume = {30},
	urldate = {2024-05-01},
	booktitle = {Advances in {Neural} {Information} {Processing} {Systems}},
	author = {Lueckmann, Jan-Matthis and Gonçalves, Pedro J and Bassetto, Giacomo and Öcal, Kaan and Nonnenmacher, Marcel and Macke, Jakob H},
	year = {2017},
}

@inproceedings{lueckmann_benchmarking_2021,
	title = {Benchmarking simulation-based inference},
	volume = {130},
	language = {en},
	urldate = {2022-08-03},
	booktitle = {Proceedings of the 24th {International} {Conference} on {Artificial} {Intelligence} and {Statistics}},
	publisher = {PMLR},
	author = {Lueckmann, Jan-Matthis and Boelts, Jan and Greenberg, David and Gonçalves, Pedro and Macke, Jakob},
	month = mar,
	year = {2021},
	pages = {343--351},
}

@inproceedings{greenberg_automatic_2019,
	title = {Automatic posterior transformation for likelihood-free inference},
	volume = {97},
	language = {en},
	urldate = {2022-07-29},
	booktitle = {Proceedings of the 36th {International} {Conference} on {Machine} {Learning}},
	publisher = {PMLR},
	author = {Greenberg, David and Nonnenmacher, Marcel and Macke, Jakob},
	month = may,
	year = {2019},
	pages = {2404--2414},
}

@article{wang_preconditioned_2024,
	title = {Preconditioned neural posterior estimation for likelihood-free inference},
	issn = {2835-8856},
	journal = {Transactions on Machine Learning Research},
	author = {Wang, Xiaoyu and Kelly, Ryan P. and Warne, David J. and Drovandi, Christopher},
	year = {2024},
	note = {\url{https://openreview.net/forum?id=vgIBAOkIhY}},
}

@book{sisson_handbook_2018,
	address = {New York},
	title = {Handbook of approximate {Bayesian} computation},
	isbn = {978-1-315-11719-5},
	doi = {10.1201/9781315117195},
	publisher = {Chapman and Hall/CRC},
	editor = {Sisson, Scott A. and Fan, Yanan and Beaumont, Mark},
	month = sep,
	year = {2018},
}

@inproceedings{schmitt_leveraging_2024,
	title = {Leveraging self-consistency for data-efficient amortized {Bayesian} inference},
	volume = {235},
	urldate = {2025-05-10},
	booktitle = {Proceedings of the 41st {International} {Conference} on {Machine} {Learning}},
	publisher = {PMLR},
	author = {Schmitt, Marvin and Ivanova, Desi R. and Habermann, Daniel and Köthe, Ullrich and Bürkner, Paul-Christian and Radev, Stefan T.},
	month = jul,
	year = {2024},
	pages = {43723--43741},
}

@inproceedings{rezende_variational_2015,
	title = {Variational inference with normalizing flows},
	volume = {37},
	language = {en},
	urldate = {2024-11-30},
	booktitle = {Proceedings of the 32nd {International} {Conference} on {Machine} {Learning}},
	publisher = {PMLR},
	author = {Rezende, Danilo and Mohamed, Shakir},
	month = jun,
	year = {2015},
	pages = {1530--1538},
}

@inproceedings{papamakarios_sequential_2019,
	title = {Sequential neural likelihood: fast likelihood-free inference with autoregressive flows},
	volume = {89},
	shorttitle = {Sequential {Neural} {Likelihood}},
	language = {en},
	urldate = {2022-07-29},
	booktitle = {Proceedings of the {Twenty}-{Second} {International} {Conference} on {Artificial} {Intelligence} and {Statistics}},
	publisher = {PMLR},
	author = {Papamakarios, George and Sterratt, David and Murray, Iain},
	month = apr,
	year = {2019},
	pages = {837--848},
}

@misc{lyu_dynamic_2025,
	title = {Dynamic {SBI}: round-free sequential simulation-based inference with adaptive datasets},
	shorttitle = {Dynamic {SBI}},
	doi = {10.48550/arXiv.2510.13997},
	urldate = {2025-10-28},
	publisher = {arXiv},
	author = {Lyu, Huifang and Alvey, James and Montel, Noemi Anau and Pieroni, Mauro and Weniger, Christoph},
	month = oct,
	year = {2025},
	note = {arXiv preprint arXiv:2510.13997},
}

@misc{kingma_adam_2017,
	title = {Adam: a method for stochastic optimization},
	shorttitle = {Adam},
	doi = {10.48550/arXiv.1412.6980},
	urldate = {2023-11-15},
	publisher = {arXiv},
	author = {Kingma, Diederik P. and Ba, Jimmy},
	month = jan,
	year = {2017},
	note = {arXiv preprint arXiv:1412.6980},
}

@inproceedings{hermans_likelihood-free_2020,
	title = {Likelihood-free {MCMC} with amortized approximate ratio estimators},
	volume = {119},
	language = {en},
	urldate = {2022-08-04},
	booktitle = {Proceedings of the 37th {International} {Conference} on {Machine} {Learning}},
	publisher = {PMLR},
	author = {Hermans, Joeri and Begy, Volodimir and Louppe, Gilles},
	month = nov,
	year = {2020},
	pages = {4239--4248},
}

@inproceedings{durkan_neural_2019,
	title = {Neural spline flows},
	volume = {32},
	booktitle = {Advances in {Neural} {Information} {Processing} {Systems}},
	publisher = {Curran Associates, Inc.},
	author = {Durkan, Conor and Bekasov, Artur and Murray, Iain and Papamakarios, George},
	editor = {Wallach, H. and Larochelle, H. and Beygelzimer, A. and Alché-Buc, F. d' and Fox, E. and Garnett, R.},
	year = {2019},
}

@inproceedings{durkan_contrastive_2020,
	title = {On contrastive learning for likelihood-free inference},
	volume = {119},
	language = {en},
	urldate = {2022-08-04},
	booktitle = {Proceedings of the 37th {International} {Conference} on {Machine} {Learning}},
	publisher = {PMLR},
	author = {Durkan, Conor and Murray, Iain and Papamakarios, George},
	month = nov,
	year = {2020},
	pages = {2771--2781},
}

@inproceedings{cremer_inference_2018,
	title = {Inference suboptimality in variational autoencoders},
	volume = {80},
	language = {en},
	urldate = {2025-10-22},
	booktitle = {Proceedings of the 35th {International} {Conference} on {Machine} {Learning}},
	publisher = {PMLR},
	author = {Cremer, Chris and Li, Xuechen and Duvenaud, David},
	month = jul,
	year = {2018},
	pages = {1078--1086},
}

@inproceedings{beck_efficient_2022,
	address = {Red Hook, NY, USA},
	series = {{NeurIPS} 2022},
	title = {Efficient identification of informative features in simulation-based inference},
	isbn = {978-1-7138-7108-8},
	urldate = {2026-01-06},
	booktitle = {Proceedings of the 36th {International} {Conference} on {Neural} {Information} {Processing} {Systems}},
	publisher = {Curran Associates Inc.},
	author = {Beck, Jonas and Deistler, Michael and Bernaerts, Yves and Macke, Jakob H. and Berens, Philipp},
	month = nov,
	year = {2022},
	pages = {19260--19273},
}

@article{kelly_misspecification-robust_2024,
	title = {Misspecification-robust sequential neural likelihood for simulation-based inference},
	issn = {2835-8856},
	journal = {Transactions on Machine Learning Research},
	author = {Kelly, Ryan P. and Nott, David J. and Frazier, David Tyler and Warne, David J. and Drovandi, Christopher},
	year = {2024},
	note = {\url{https://openreview.net/forum?id=tbOYJwXhcY}},
}

@article{breiman_random_2001,
	title = {Random forests},
	volume = {45},
	issn = {1573-0565},
	doi = {10.1023/A:1010933404324},
	language = {en},
	number = {1},
	urldate = {2025-11-06},
	journal = {Machine Learning},
	author = {Breiman, Leo},
	month = oct,
	year = {2001},
	pages = {5--32},
}

@article{bi_random_2022,
	title = {Random forest adjustment for approximate {Bayesian} computation},
	volume = {31},
	doi = {10.1080/10618600.2021.1981341},
	number = {1},
	journal = {Journal of Computational and Graphical Statistics},
	author = {Bi, Jiefeng and Shen, Weining and Zhu, Weixuan},
	year = {2022},
	pages = {64--73},
}

@article{beaumont_adaptive_2009,
	title = {Adaptive approximate {Bayesian} computation},
	volume = {96},
	issn = {0006-3444},
	doi = {10.1093/biomet/asp052},
	number = {4},
	urldate = {2025-10-21},
	journal = {Biometrika},
	author = {Beaumont, Mark A. and Cornuet, Jean-Marie and Marin, Jean-Michel and Robert, Christian P.},
	year = {2009},
	pages = {983--990},
}

@inproceedings{phan_composable_2019,
	title = {Composable effects for flexible and accelerated probabilistic programming in {NumPyro}},
	language = {en},
	urldate = {2025-12-21},
	author = {Phan, Du and Pradhan, Neeraj and Jankowiak, Martin},
	month = sep,
	year = {2019},
	note = {\url{https://github.com/pyro-ppl/numpyro}},
}

@misc{ward_flowjax_2025,
	title = {{FlowJAX}: distributions and normalizing flows in {Jax}},
	author = {Ward, Daniel},
	year = {2025},
	note = {\url{https://github.com/danielward27/flowjax}},
}

@article{drovandi_improving_2024,
	title = {Improving the accuracy of marginal approximations in likelihood-free inference via localization},
	volume = {33},
	issn = {1061-8600},
	doi = {10.1080/10618600.2023.2223574},
	number = {1},
	urldate = {2024-06-13},
	journal = {Journal of Computational and Graphical Statistics},
	author = {Drovandi, Christopher C. and Nott, David J. and Frazier, David T.},
	month = jan,
	year = {2024},
	pages = {101--111},
}

@article{drovandi_estimation_2011,
	title = {Estimation of parameters for macroparasite population evolution using approximate {Bayesian} computation},
	volume = {67},
	issn = {0006-341X},
	doi = {10.1111/j.1541-0420.2010.01410.x},
	number = {1},
	urldate = {2024-05-08},
	journal = {Biometrics},
	author = {Drovandi, Christopher C. and Pettitt, A. N.},
	month = mar,
	year = {2011},
	pages = {225--233},
}

@article{pedregosa_scikit-learn_2011,
	title = {Scikit-learn: machine learning in {Python}},
	volume = {12},
	issn = {1533-7928},
	shorttitle = {Scikit-learn},
	number = {85},
	urldate = {2025-11-28},
	journal = {Journal of Machine Learning Research},
	author = {Pedregosa, Fabian and Varoquaux, Gael and Gramfort, Alexandre and Michel, Vincent and Thirion, Bertrand and Grisel, Olivier and Blondel, Mathieu and Prettenhofer, Peter and Weiss, Ron and Dubourg, Vincent and Vanderplas, Jake and Passos, Alexandre and Cournapeau, David and Brucher, Matthieu and Perrot, Matthieu and Duchesnay, Edouard},
	year = {2011},
	pages = {2825--2830},
}

@phdthesis{louppe_understanding_2014,
	title = {Understanding random forests: from theory to practice},
	shorttitle = {Understanding {Random} {Forests}},
	urldate = {2025-11-28},
	school = {University of Liège},
	author = {Louppe, Gilles},
	month = oct,
	year = {2014},
}

@article{boelts_simulation-based_2023,
	title = {Simulation-based inference for efficient identification of generative models in computational connectomics},
	volume = {19},
	issn = {1553-7358},
	doi = {10.1371/journal.pcbi.1011406},
	language = {en},
	number = {9},
	urldate = {2025-11-09},
	journal = {PLOS Computational Biology},
	author = {Boelts, Jan and Harth, Philipp and Gao, Richard and Udvary, Daniel and Yáñez, Felipe and Baum, Daniel and Hege, Hans-Christian and Oberlaender, Marcel and Macke, Jakob H.},
	month = sep,
	year = {2023},
	pages = {e1011406},
}

@article{hull_simulation-based_2024,
	title = {Simulation-based inference for parameter estimation of complex watershed simulators},
	volume = {28},
	issn = {1027-5606},
	doi = {10.5194/hess-28-4685-2024},
	language = {English},
	number = {20},
	urldate = {2025-11-09},
	journal = {Hydrology and Earth System Sciences},
	author = {Hull, Robert and Leonarduzzi, Elena and De La Fuente, Luis and Viet Tran, Hoang and Bennett, Andrew and Melchior, Peter and Maxwell, Reed M. and Condon, Laura E.},
	month = oct,
	year = {2024},
	pages = {4685--4713},
}

@article{greig_exploring_2024,
	title = {Exploring the role of the halo-mass function for inferring astrophysical parameters during reionization},
	volume = {533},
	issn = {0035-8711},
	doi = {10.1093/mnras/stae1983},
	number = {2},
	urldate = {2025-11-09},
	journal = {Monthly Notices of the Royal Astronomical Society},
	author = {Greig, Bradley and Prelogović, David and Mirocha, Jordan and Qin, Yuxiang and Ting, Yuan-Sen and Mesinger, Andrei},
	month = sep,
	year = {2024},
	pages = {2502--2529},
}

@book{breiman_classification_1984,
	address = {Belmont, Calif.},
	series = {Wadsworth statistics/probability series},
	title = {Classification and regression trees},
	isbn = {978-0-534-98053-5},
	urldate = {2025-11-06},
	publisher = {Wadsworth International Group},
	author = {Breiman, Leo},
	year = {1984},
}

@article{breiman_bagging_1996,
	title = {Bagging predictors},
	volume = {24},
	issn = {1573-0565},
	doi = {10.1007/BF00058655},
	language = {en},
	number = {2},
	urldate = {2025-11-06},
	journal = {Machine Learning},
	author = {Breiman, Leo},
	month = aug,
	year = {1996},
	pages = {123--140},
}

@article{meinshausen_quantile_2006,
	title = {Quantile regression forests},
	volume = {7},
	issn = {1533-7928},
	number = {35},
	urldate = {2025-10-28},
	journal = {Journal of Machine Learning Research},
	author = {Meinshausen, Nicolai},
	year = {2006},
	pages = {983--999},
}

@article{pudlo_reliable_2016,
	title = {Reliable {ABC} model choice via random forests},
	volume = {32},
	issn = {1367-4803},
	doi = {10.1093/bioinformatics/btv684},
	number = {6},
	urldate = {2025-10-26},
	journal = {Bioinformatics},
	author = {Pudlo, Pierre and Marin, Jean-Michel and Estoup, Arnaud and Cornuet, Jean-Marie and Gautier, Mathieu and Robert, Christian P.},
	month = mar,
	year = {2016},
	pages = {859--866},
}

@article{raynal_abc_2019,
	title = {{ABC} random forests for {Bayesian} parameter inference},
	volume = {35},
	issn = {1367-4803},
	doi = {10.1093/bioinformatics/bty867},
	number = {10},
	urldate = {2025-10-26},
	journal = {Bioinformatics},
	author = {Raynal, Louis and Marin, Jean-Michel and Pudlo, Pierre and Ribatet, Mathieu and Robert, Christian P and Estoup, Arnaud},
	month = may,
	year = {2019},
	pages = {1720--1728},
}

@article{de_santi_field-level_2025,
	title = {Field-level simulation-based inference with galaxy catalogs: the impact of systematic effects},
	volume = {2025},
	issn = {1475-7516},
	shorttitle = {Field-level simulation-based inference with galaxy catalogs},
	doi = {10.1088/1475-7516/2025/01/082},
	language = {en},
	number = {01},
	urldate = {2025-10-26},
	journal = {Journal of Cosmology and Astroparticle Physics},
	author = {de Santi, Natalí S.M. and Villaescusa-Navarro, Francisco and Raul Abramo, L. and Shao, Helen and Perez, Lucia A. and Castro, Tiago and Ni, Yueying and Lovell, Christopher C. and Hernández-Martínez, Elena and Marinacci, Federico and Spergel, David N. and Dolag, Klaus and Hernquist, Lars and Vogelsberger, Mark},
	month = jan,
	year = {2025},
	pages = {082},
}

@article{shih_fast_2024,
	title = {Fast parameter inference on pulsar timing arrays with normalizing flows},
	volume = {133},
	doi = {10.1103/PhysRevLett.133.011402},
	number = {1},
	journal = {Physical Review Letters},
	author = {Shih, David},
	year = {2024},
}

@article{dinh_approximate_2025,
	title = {Approximate {Bayesian} computation sequential {Monte} {Carlo} via random forests},
	volume = {35},
	issn = {1573-1375},
	doi = {10.1007/s11222-025-10748-x},
	language = {en},
	number = {6},
	urldate = {2025-10-26},
	journal = {Statistics and Computing},
	author = {Dinh, Khanh N. and Liu, Cécile and Xiang, Zijin and Liu, Zhihan and Tavaré, Simon},
	month = oct,
	year = {2025},
	pages = {219},
}

@article{hoffman_no-u-turn_2014,
	title = {The no-{U}-turn sampler: adaptively setting path lengths in {Hamiltonian} {Monte} {Carlo}},
	volume = {15},
	issn = {1533-7928},
	shorttitle = {The {No}-{U}-{Turn} {Sampler}},
	number = {47},
	urldate = {2025-09-07},
	journal = {Journal of Machine Learning Research},
	author = {Hoffman, Matthew D. and Gelman, Andrew},
	year = {2014},
	pages = {1593--1623},
}

@inproceedings{ward_robust_2022,
	title = {Robust neural posterior estimation and statistical model criticism},
	volume = {35},
	language = {en},
	urldate = {2024-07-05},
	booktitle = {Advances in {Neural} {Information} {Processing} {Systems}},
	author = {Ward, Daniel and Cannon, Patrick and Beaumont, Mark and Fasiolo, Matteo and Schmon, Sebastian M.},
	month = oct,
	year = {2022},
}

@inproceedings{schmitt_detecting_2024,
	address = {Cham},
	title = {Detecting model misspecification in amortized {Bayesian} inference with neural networks},
	isbn = {978-3-031-54605-1},
	doi = {10.1007/978-3-031-54605-1_35},
	language = {en},
	booktitle = {Pattern {Recognition}},
	publisher = {Springer Nature Switzerland},
	author = {Schmitt, Marvin and Bürkner, Paul-Christian and Köthe, Ullrich and Radev, Stefan T.},
	editor = {Köthe, Ullrich and Rother, Carsten},
	year = {2024},
	pages = {541--557},
}

@misc{wehenkel_addressing_2025,
	title = {Addressing misspecification in simulation-based inference through data-driven calibration},
	doi = {10.48550/arXiv.2405.08719},
	urldate = {2025-08-11},
	publisher = {arXiv},
	author = {Wehenkel, Antoine and Gamella, Juan L. and Sener, Ozan and Behrmann, Jens and Sapiro, Guillermo and Jacobsen, Jörn-Henrik and Cuturi, Marco},
	month = may,
	year = {2025},
	note = {arXiv preprint arXiv:2405.08719},
}

@article{zammit-mangion_neural_2025,
	title = {Neural methods for amortized inference},
	volume = {12},
	issn = {2326-8298, 2326-831X},
	doi = {10.1146/annurev-statistics-112723-034123},
	language = {en},
	urldate = {2025-08-11},
	journal = {Annual Review of Statistics and Its Application},
	author = {Zammit-Mangion, Andrew and Sainsbury-Dale, Matthew and Huser, Raphaël},
	month = mar,
	year = {2025},
	pages = {311--335},
}

@misc{kelly_simulation-based_2025,
	title = {Simulation-based {Bayesian} inference under model misspecification},
	doi = {10.48550/arXiv.2503.12315},
	urldate = {2025-06-01},
	publisher = {arXiv},
	author = {Kelly, Ryan P. and Warne, David J. and Frazier, David T. and Nott, David J. and Gutmann, Michael U. and Drovandi, Christopher},
	month = mar,
	year = {2025},
	note = {arXiv preprint arXiv:2503.12315},
}

@misc{mishra_robust_2025,
	title = {Robust amortized {Bayesian} inference with self-consistency losses on unlabeled data},
	doi = {10.48550/arXiv.2501.13483},
	urldate = {2025-05-11},
	publisher = {arXiv},
	author = {Mishra, Aayush and Habermann, Daniel and Schmitt, Marvin and Radev, Stefan T. and Bürkner, Paul-Christian},
	month = mar,
	year = {2025},
	note = {arXiv preprint arXiv:2501.13483},
}

@article{kleijn_bernstein-von-mises_2012,
	title = {The {Bernstein}-von-{Mises} theorem under misspecification},
	volume = {6},
	issn = {1935-7524, 1935-7524},
	doi = {10.1214/12-EJS675},
	urldate = {2023-07-04},
	journal = {Electronic Journal of Statistics},
	author = {Kleijn, B. J. K. and Vaart, A. W. van der},
	month = jan,
	year = {2012},
	pages = {354--381},
}

@inproceedings{huang_learning_2023,
	title = {Learning robust statistics for simulation-based inference under model misspecification},
	volume = {36},
	language = {en},
	urldate = {2024-07-05},
	booktitle = {Advances in {Neural} {Information} {Processing} {Systems}},
	author = {Huang, Daolang and Bharti, Ayush and Souza, Amauri and Acerbi, Luigi and Kaski, Samuel},
	month = dec,
	year = {2023},
}

@inproceedings{papamakarios_fast_2016,
	title = {Fast $\varepsilon$-free inference of simulation models with {Bayesian} conditional density estimation},
	volume = {29},
	urldate = {2022-07-29},
	booktitle = {Advances in {Neural} {Information} {Processing} {Systems}},
	author = {Papamakarios, George and Murray, Iain},
	year = {2016},
}

@misc{cranmer_approximating_2016,
	title = {Approximating likelihood ratios with calibrated discriminative classifiers},
	doi = {10.48550/arXiv.1506.02169},
	urldate = {2025-02-09},
	publisher = {arXiv},
	author = {Cranmer, Kyle and Pavez, Juan and Louppe, Gilles},
	month = mar,
	year = {2016},
	note = {arXiv preprint arXiv:1506.02169},
}

@article{marin_relevant_2014,
	title = {Relevant statistics for {Bayesian} model choice},
	volume = {76},
	issn = {1467-9868},
	doi = {10.1111/rssb.12056},
	language = {en},
	number = {5},
	urldate = {2022-11-30},
	journal = {Journal of the Royal Statistical Society: Series B (Statistical Methodology)},
	author = {Marin, Jean-Michel and Pillai, Natesh S. and Robert, Christian P. and Rousseau, Judith},
	year = {2014},
	pages = {833--859},
}

@article{frazier_robust_2021,
	title = {Robust approximate {Bayesian} inference with synthetic likelihood},
	volume = {30},
	issn = {1061-8600},
	doi = {10.1080/10618600.2021.1875839},
	number = {4},
	urldate = {2022-07-30},
	journal = {Journal of Computational and Graphical Statistics},
	author = {Frazier, David T. and Drovandi, Christopher},
	month = oct,
	year = {2021},
	pages = {958--976},
}

@article{berk_limiting_1966,
	title = {Limiting behavior of posterior distributions when the model is incorrect},
	volume = {37},
	issn = {0003-4851, 2168-8990},
	doi = {10.1214/aoms/1177699597},
	number = {1},
	urldate = {2024-06-02},
	journal = {The Annals of Mathematical Statistics},
	author = {Berk, Robert H.},
	month = feb,
	year = {1966},
	pages = {51--58},
}

@misc{frazier_statistical_2024,
	title = {The statistical accuracy of neural posterior and likelihood estimation},
	doi = {10.48550/arXiv.2411.12068},
	urldate = {2024-11-25},
	publisher = {arXiv},
	author = {Frazier, David T. and Kelly, Ryan and Drovandi, Christopher and Warne, David J.},
	month = nov,
	year = {2024},
	note = {arXiv preprint arXiv:2411.12068},
}

@article{wilkinson_approximate_2013,
	title = {Approximate {Bayesian} computation ({ABC}) gives exact results under the assumption of model error},
	volume = {12},
	issn = {1544-6115},
	doi = {10.1515/sagmb-2013-0010},
	language = {en},
	number = {2},
	urldate = {2022-09-14},
	journal = {Statistical Applications in Genetics and Molecular Biology},
	author = {Wilkinson, Richard David},
	month = may,
	year = {2013},
	pages = {129--141},
}

@article{ratmann_model_2009,
	title = {Model criticism based on likelihood-free inference, with an application to protein network evolution},
	volume = {106},
	doi = {10.1073/pnas.0807882106},
	number = {26},
	urldate = {2024-05-01},
	journal = {Proceedings of the National Academy of Sciences},
	author = {Ratmann, Oliver and Andrieu, Christophe and Wiuf, Carsten and Richardson, Sylvia},
	month = jun,
	year = {2009},
	pages = {10576--10581},
}

@article{radev_bayesflow_2023,
	title = {{BayesFlow}: amortized {Bayesian} workflows with neural networks},
	volume = {8},
	issn = {2475-9066},
	shorttitle = {{BayesFlow}},
	doi = {10.21105/joss.05702},
	language = {en},
	number = {89},
	urldate = {2024-07-05},
	journal = {Journal of Open Source Software},
	author = {Radev, Stefan T. and Schmitt, Marvin and Schumacher, Lukas and Elsemüller, Lasse and Pratz, Valentin and Schälte, Yannik and Köthe, Ullrich and Bürkner, Paul-Christian},
	month = sep,
	year = {2023},
	pages = {5702},
}

@article{kennedy_bayesian_2001,
	title = {Bayesian calibration of computer models},
	volume = {63},
	issn = {1467-9868},
	doi = {10.1111/1467-9868.00294},
	language = {en},
	number = {3},
	urldate = {2024-03-26},
	journal = {Journal of the Royal Statistical Society: Series B (Statistical Methodology)},
	author = {Kennedy, Marc C. and O'Hagan, Anthony},
	year = {2001},
	pages = {425--464},
}

@misc{cannon_investigating_2022,
	title = {Investigating the impact of model misspecification in neural simulation-based inference},
	doi = {10.48550/arXiv.2209.01845},
	urldate = {2022-09-08},
	publisher = {arXiv},
	author = {Cannon, Patrick and Ward, Daniel and Schmon, Sebastian M.},
	month = sep,
	year = {2022},
	note = {arXiv preprint arXiv:2209.01845},
}

@article{wood_statistical_2010,
	title = {Statistical inference for noisy nonlinear ecological dynamic systems},
	volume = {466},
	copyright = {2010 Springer Nature Limited},
	issn = {1476-4687},
	doi = {10.1038/nature09319},
	language = {en},
	number = {7310},
	urldate = {2024-05-01},
	journal = {Nature},
	author = {Wood, Simon N.},
	month = aug,
	year = {2010},
	pages = {1102--1104},
}

@article{sisson_sequential_2007,
	title = {Sequential {Monte} {Carlo} without likelihoods},
	volume = {104},
	doi = {10.1073/pnas.0607208104},
	number = {6},
	urldate = {2024-05-08},
	journal = {Proceedings of the National Academy of Sciences},
	author = {Sisson, S. A. and Fan, Y. and Tanaka, Mark M.},
	month = feb,
	year = {2007},
	pages = {1760--1765},
}

@article{price_bayesian_2018,
	title = {Bayesian synthetic likelihood},
	volume = {27},
	issn = {1061-8600},
	doi = {10.1080/10618600.2017.1302882},
	number = {1},
	urldate = {2024-05-01},
	journal = {Journal of Computational and Graphical Statistics},
	author = {Price, L. F. and Drovandi, C. C. and Lee, A. and Nott, D. J.},
	month = jan,
	year = {2018},
	pages = {1--11},
}

@article{marjoram_markov_2003,
	title = {Markov chain {Monte} {Carlo} without likelihoods},
	volume = {100},
	doi = {10.1073/pnas.0306899100},
	number = {26},
	urldate = {2024-05-08},
	journal = {Proceedings of the National Academy of Sciences},
	author = {Marjoram, Paul and Molitor, John and Plagnol, Vincent and Tavaré, Simon},
	month = dec,
	year = {2003},
	pages = {15324--15328},
}

@article{frazier_model_2020,
	title = {Model misspecification in approximate {Bayesian} computation: consequences and diagnostics},
	volume = {82},
	issn = {1369-7412},
	shorttitle = {Model {Misspecification} in {Approximate} {Bayesian} {Computation}},
	doi = {10.1111/rssb.12356},
	number = {2},
	urldate = {2024-05-08},
	journal = {Journal of the Royal Statistical Society Series B: Statistical Methodology},
	author = {Frazier, David T. and Robert, Christian P. and Rousseau, Judith},
	month = apr,
	year = {2020},
	pages = {421--444},
}

@article{csillery_abc_2012,
	title = {abc: an {R} package for approximate {Bayesian} computation ({ABC})},
	volume = {3},
	issn = {2041-210X},
	shorttitle = {abc},
	doi = {10.1111/j.2041-210X.2011.00179.x},
	language = {en},
	number = {3},
	urldate = {2023-11-22},
	journal = {Methods in Ecology and Evolution},
	author = {Csilléry, Katalin and François, Olivier and Blum, Michael G. B.},
	year = {2012},
	pages = {475--479},
}

@article{cranmer_frontier_2020,
	title = {The frontier of simulation-based inference},
	volume = {117},
	doi = {10.1073/pnas.1912789117},
	number = {48},
	urldate = {2024-05-27},
	journal = {Proceedings of the National Academy of Sciences},
	author = {Cranmer, Kyle and Brehmer, Johann and Louppe, Gilles},
	month = dec,
	year = {2020},
	pages = {30055--30062},
}

@article{blum_approximate_2010,
	title = {Approximate {Bayesian} computation: a nonparametric perspective},
	volume = {105},
	issn = {0162-1459},
	shorttitle = {Approximate {Bayesian} {Computation}},
	doi = {10.1198/jasa.2010.tm09448},
	number = {491},
	urldate = {2024-06-27},
	journal = {Journal of the American Statistical Association},
	author = {Blum, Michael G. B.},
	month = sep,
	year = {2010},
	pages = {1178--1187},
}

@article{beaumont_approximate_2002,
	title = {Approximate {Bayesian} computation in population genetics},
	volume = {162},
	issn = {1943-2631},
	doi = {10.1093/genetics/162.4.2025},
	number = {4},
	urldate = {2024-06-19},
	journal = {Genetics},
	author = {Beaumont, Mark A and Zhang, Wenyang and Balding, David J},
	month = dec,
	year = {2002},
	pages = {2025--2035},
}

@article{bayarri_modularization_2009,
	title = {Modularization in {Bayesian} analysis, with emphasis on analysis of computer models},
	volume = {4},
	issn = {1936-0975, 1931-6690},
	doi = {10.1214/09-BA404},
	number = {1},
	urldate = {2024-03-26},
	journal = {Bayesian Analysis},
	author = {Bayarri, M. J. and Berger, J. O. and Liu, F.},
	month = mar,
	year = {2009},
	pages = {119--150},
}

@article{barber_rate_2015,
	title = {The rate of convergence for approximate {Bayesian} computation},
	volume = {9},
	issn = {1935-7524, 1935-7524},
	doi = {10.1214/15-EJS988},
	number = {1},
	urldate = {2024-05-08},
	journal = {Electronic Journal of Statistics},
	author = {Barber, Stuart and Voss, Jochen and Webster, Mark},
	month = jan,
	year = {2015},
	pages = {80--105},
}

@article{walker_bayesian_2013,
	title = {Bayesian inference with misspecified models},
	volume = {143},
	issn = {0378-3758},
	doi = {10.1016/j.jspi.2013.05.013},
	language = {en},
	number = {10},
	urldate = {2023-06-05},
	journal = {Journal of Statistical Planning and Inference},
	author = {Walker, Stephen G.},
	month = oct,
	year = {2013},
	pages = {1621--1633},
}
\appendix

\section{SMC ABC background}\label{sec:app_smcabc_background}
Sequential Monte Carlo approximate Bayesian computation (SMC ABC) targets a sequence of joint distributions
\begin{equation*}
    \pi_t(\bm{\theta}, \bm{x} \mid S(\bm{y}), \epsilon_t) \propto \mathbb{I}\{\rho(S(\bm{x}), S(\bm{y})) \leq \epsilon_t \} p(\bm{x} \mid \bm{\theta}) p(\bm{\theta}),
\end{equation*}
defined by a decreasing tolerance schedule $\epsilon_1 \geq \cdots \geq \epsilon_T$.
While Rejection ABC produces independent but inefficient draws \citep{beaumont_approximate_2002}, MCMC ABC improves efficiency at the cost of Markov dependence and tuning \citep{marjoram_markov_2003}.
We adopt the replenishment SMC ABC algorithm of \citet{drovandi_estimation_2011}, which propagates particles by resampling and applying an MCMC mutation kernel invariant to $\pi_t$.

At iteration $t$, let $\{ (\bm{\theta}_i, \bm{s}_i, \rho_i) \}_{i=1}^N$ denote the population of particles, summaries, and discrepancies.
We sort the population by $\rho_i$ and discard the fraction $\alpha \in (0, 1)$ with the highest discrepancies. We retain $N_{\text{alive}} = N - \lfloor \alpha N \rfloor$ particles and set the next tolerance to the largest retained distance,
\begin{equation*}
    \epsilon_{t+1} = \max_{i \leq N_{\text{alive}}} \rho_i.
\end{equation*}
We replenish the population by resampling $\lfloor \alpha N \rfloor$ particles from the alive set (with replacement), then applying an MCMC ABC move to each duplicate using a symmetric random-walk proposal
\begin{equation*}
    \bm{\theta}^{\prime} \sim \mathcal{N}(\bm{\theta}, \Sigma_t),
    \end{equation*}
where $\Sigma_t$ is the sample covariance of the alive parameters. We simulate $\bm{x}^{\prime} \sim P_{\bm{\theta}^{\prime}}$, compute $\bm{s}' = S(\bm{x}^{\prime})$ and $\rho^{\prime} = \rho(\bm{s}', \bm{s}_y)$, and accept the move with probability
\begin{equation*}
  \alpha_{\text{acc}} = \min \left(1, \frac{\pi(\bm{\theta}^{\prime})}{\pi(\bm{\theta})} \mathbb{I}\{\rho^{\prime} \leq \epsilon_{t+1}\} \right).
\end{equation*}

The number of MCMC updates $R_t$ applied to each resampled particle is adapted to ensure diversity. Let $\hat{p}_t$ denote the estimated acceptance probability for the MCMC kernel. We set
\begin{equation*}
    R_t = \left\lceil \frac{\log c}{\log{1-\hat{p}_t}} \right\rceil,
\end{equation*}
where $c \in (0, 1)$ is the target probability that a resampled particle remains a duplicate after $R_t$ trials (typically $c=0.01$).
The algorithm terminates when $\epsilon_{t+1} \leq \epsilon_{\min}$, the acceptance probability drops below a threshold $p_{\min}$, or a maximum number of generations is reached.
This approach avoids the need for a pre-specified tolerance schedule, adaptively determining intermediate targets while maintaining particle diversity \citep{drovandi_estimation_2011}.

\begin{algorithm}[htbp]
\caption{Adaptive Replenishment SMC ABC \citep{drovandi_estimation_2011}}
\label{alg:smcabc-replenish}
\begin{algorithmic}[1]
\Require Observed summary $\bm{s}_y$; prior $\pi(\bm{\theta})$; simulator $P_{\bm{\theta}}$; discrepancy $\rho$; population size $N$; drop fraction $\alpha$; min tolerance $\epsilon_{\min}$; min acceptance $p_{\min}$; duplication prob $c$; max generations $T_{\max}$.
\Ensure Posterior samples $\{\bm{\theta}_i\}_{i=1}^N$ (and associated summaries).
\State \textbf{Initialise:} For $i=1,\dots,N$: draw $\bm{\theta}_i \sim \pi(\bm{\theta})$, simulate $\bm{x}_i \sim P_{\bm{\theta}_i}$, compute $\bm{s}_i \gets S(\bm{x}_i)$ and $\rho_i \gets \rho(\bm{s}_i, \bm{s}_y)$.
\State Set $\epsilon \gets \max_i \rho_i$, $R \gets 1$, $t \gets 1$.
\While{$t \le T_{\max}$}
  \State Sort population by $\rho_i$ ascending.
  \State Determine cutoff index $N_{\text{alive}} \gets N - \lfloor \alpha N \rfloor$.
  \State Update tolerance $\epsilon \gets \rho_{N_{\text{alive}}}$ (discrepancy of the worst kept particle).
  \State Compute covariance $\Sigma_t$ of the $N_{\text{alive}}$ kept parameters.
  \State Resample $\lfloor \alpha N \rfloor$ indices from the alive set to form the replenishment set $\mathcal{J}$.
  \State $N_{\text{acc}} \gets 0$
  \For{$j \in \mathcal{J}$} \Comment{Apply MCMC diversity steps}
    \For{$r=1,\dots,R$}
      \State Propose $\bm{\theta}' \sim \mathcal{N}(\bm{\theta}_j, \Sigma_t)$.
      \State Simulate $\bm{x}' \sim P_{\bm{\theta}'}$, compute $\bm{s}' \gets S(\bm{x}')$ and $\rho' \gets \rho(\bm{s}', \bm{s}_y)$.
      \State Compute $a \gets \min\left(1, \frac{\pi(\bm{\theta}')}{\pi(\bm{\theta}_j)} \mathbb{I}\{\rho' \le \epsilon\} \right)$.
      \If{$\mathcal{U}(0,1) < a$}
        \State $\bm{\theta}_j \gets \bm{\theta}'$, \quad $\bm{s}_j \gets \bm{s}'$, \quad $\rho_j \gets \rho'$, \quad $N_{\text{acc}} \gets N_{\text{acc}} + 1$.
      \EndIf
    \EndFor
  \EndFor
  \State Update $\hat{p}_t \gets N_{\text{acc}} / (\lfloor \alpha N \rfloor \times R)$.
  \State Update $R \gets \max\left(1, \lceil \frac{\log c}{\log(1-\hat{p}_t)} \rceil \right)$.
  \State Reassemble population (Alive + Replenished).
  \If{$\epsilon \le \epsilon_{\min}$ \textbf{or} $\hat{p}_t < p_{\min}$} 
    \State \textbf{break} 
  \EndIf
  \State $t \gets t+1$.
\EndWhile
\State \Return $\{\bm{\theta}_i\}_{i=1}^N$.
\end{algorithmic}
\end{algorithm}


\section{Random forest background}\label{sec:app_rf_background}
We review regression trees and random forests, focusing on the leaf-based weights used to define tolerance-free, data-adaptive neighbourhoods around the observed summary.

Regression trees (CART; \citet{breiman_classification_1984}) recursively partition the summary space $\mathcal{S} \subset \mathbb{R}^{d_s}$ via axis‑aligned binary splits that greedily minimise within‑node squared error. This yields a piecewise‑constant predictor equal to the mean response in each terminal region. At a node with index set $\mathcal{I}$, we select the feature $j\in\{1,\dots,d_s\}$ and threshold $\tau$ that maximise the impurity decrease
\begin{equation*}    
\Delta = |\mathcal{I}|\operatorname{Var}_{\mathcal{I}}(t) - |\mathcal{I}_L|\operatorname{Var}_L(t) - |\mathcal{I}_R|\operatorname{Var}_R(t),
\end{equation*}
where the split induces the partition\begin{equation*}
\mathcal{I}_L=\{i\in \mathcal{I}  \colon s_{ij}\le \tau\}, \quad \mathcal{I}_R= \mathcal{I} \setminus \mathcal{I}_L.
\end{equation*}
Recursion halts when depth reaches $D_{\max}$ or the node size falls below $m_{\min}$; such pre‑pruning controls prevent overfitting and limit computational cost. For any terminal region $R_\ell$ with index set $I_\ell$, the predictor is the local mean
\begin{equation*}
\hat f(\bm{s})=\bar t_\ell=\frac{1}{|I_\ell|}\sum_{i\in I_\ell} t_i, \quad \forall \bm{s}\in R_\ell,
\end{equation*}
which minimises the squared error within the leaf.
While single trees have high variance, averaging many trees via bagging reduces variance \citep{breiman_bagging_1996}; adding random feature selection at each split yields random forests \citep{breiman_random_2001}.

A random forest aggregates $B$ randomised regression trees to reduce variance and define proximities in $\mathcal{S}$ via leaf co-membership \citep{breiman_random_2001}.
For each tree $b$, we draw a bootstrap sample $D_b$ from the training set $\mathcal{D}=\{(\bm{s}_i,t_i)\}_{i=1}^N$; this resampling adds variability and enables out‑of‑bag error assessment.
At each split, we consider a random subset of $m_{\text{try}}$ features to decorrelate the trees and improve ensemble diversity.
Predictions are aggregated via averaging, $\hat f^{\mathrm{RF}}(\bm{s})=\frac{1}{B}\sum_{b=1}^B \hat f_b(\bm{s})$.
Beyond point prediction, the forest structure defines a kernel based on the leaf $L_b(\bm{s})$ containing a query $\bm{s}$. The proximity between two points is the proportion of trees in which they share a leaf:
\begin{equation*}
    \mathrm{prox}(\bm{s},\bm{s}')=\frac{1}{B}\sum_{b=1}^{B}\mathbb{I}\{L_b(\bm{s})=L_b(\bm{s}')\}.
\end{equation*}
To construct data-adaptive weights for preconditioning, we utilise the per-leaf uniform weights defined by \citet{meinshausen_quantile_2006}:
\begin{equation*}
    v_i^{(b)}(\bm{s}) = \frac{\mathbb{I}\{\bm{s}_i\in L_b(\bm{s})\}}{|L_b(\bm{s})|},
\end{equation*}
which distribute mass uniformly among the training samples co-occurring with $\bm{s}$ in tree $b$. Algorithm~\ref{alg:train-rf} summarises the standard training loop.

\begin{algorithm}[htbp]
\caption{Random forest training}
\label{alg:train-rf}
\begin{algorithmic}[1]
\Require Dataset $\mathcal{D}=\{(\bm{s}_i,\theta_i)\}_{i=1}^N$; number of trees $B$; Bootstrap flag.
\Require Tree topology constraints: $n_{\min}$ (leaf size), $h_{\max}$ (depth), $m_{\text{try}}$ (features), $\delta$ (impurity).
\Ensure Ensemble $\mathcal{F}=\{T_b\}_{b=1}^{B}$ and out-of-bag sets $\{\mathcal{O}_b\}_{b=1}^{B}$.
\State Initialise $\mathcal{F} \gets \emptyset$, $\mathcal{O} \gets \emptyset$.
\For{$b=1, \dots, B$}
  \If{Bootstrap}
    \State Draw $N$ indices $I_b$ from $\{1,\dots,N\}$ with replacement.
    \State Record out-of-bag indices $\mathcal{O}_b \gets \{1,\dots,N\} \setminus I_b$.
  \Else
    \State Set $I_b \gets \{1,\dots,N\}$ and $\mathcal{O}_b \gets \emptyset$.
  \EndIf
  \State Construct subset $\mathcal{D}_b \gets \{(\bm{s}_i, \theta_i) : i \in I_b\}$.
  \State Grow regression tree $T_b$ on $\mathcal{D}_b$ subject to topology constraints.
  \State Add $T_b$ to $\mathcal{F}$ and $\mathcal{O}_b$ to $\mathcal{O}$.
\EndFor
\State \Return $\mathcal{F}, \mathcal{O}$.
\end{algorithmic}
\end{algorithm}

In the context of ABC, forest proximities define a probability mass function (pmf) over the training simulations, centred at the observed summary $\bm{s}_y$ \citep{raynal_abc_2019}.
Let $L_b(\bm{s}_y)$ denote the leaf in tree $b$ containing $\bm{s}_y$. To account for bootstrap resampling, let $n_{ib}$ be the number of times simulation $i$ appears in the bootstrap sample used to train tree $b$.
The effective size of the leaf is the total count of in-bag samples it contains: $|L_b(\bm{s}_y)| = \sum_{i=1}^N n_{ib} \mathbb{I}\{\bm{s}_i \in L_b(\bm{s}_y)\}$.
Following \citet{meinshausen_quantile_2006}, we define the per-tree weight for simulation $i$ as
\begin{equation*}
    w_{ib}(\bm{s}_y) = \frac{n_{ib} \mathbb{I}\{\bm{s}_i \in L_b(\bm{s}_y)\}}{|L_b(\bm{s}_y)|}.
\end{equation*}
Aggregating across the forest yields the final weight $W_i^{\mathrm{RF}}(\bm{s}_y) = \frac{1}{B} \sum_{b=1}^B w_{ib}(\bm{s}_y)$, which allows for the estimation of conditional expectations via $\widetilde{\mathbb{E}}[\theta \mid \bm{s}_y] = \sum_{i=1}^N W_i^{\mathrm{RF}}(\bm{s}_y) \theta_i$.
Unlike the standard proximity matrix $P_{ij}$, which is symmetric and whose rows do not sum to one, the leaf-normalised weights $W^{\mathrm{RF}}$ constitute a valid distribution suitable for importance sampling and ESS calculation.

In practice, we simplify the weighting scheme by ignoring bootstrap multiplicities. For a given tree $b$, the weight becomes
\begin{equation*}
    \bar{w}_{ib}(\bm{s}_y) = \frac{\mathbb{I}\{\bm{s}_i \in L_b(\bm{s}_y)\}}{\sum_{k=1}^N \mathbb{I}\{\bm{s}_k \in L_b(\bm{s}_y)\}}.
\end{equation*}
For large $N$, this approximation closely matches the multiplicity-aware weights while reducing implementation complexity \citep{meinshausen_quantile_2006}.
We aggregate these weights across the $B$ trees of all $d_{\bm{\theta}}$ per-parameter forests. The final weight for simulation $i$ is
\begin{equation*}
    W^{\mathrm{RF}}(\bm{s}_i;\bm{s}_y) = \frac{1}{d_{\bm{\theta}}B} \sum_{j=1}^{d_{\bm{\theta}}} \sum_{b=1}^{B} \frac{\mathbb{I}\{\bm{s}_i \in L_{jb}(\bm{s}_y)\}}{|L_{jb}(\bm{s}_y)|}.
\end{equation*}
This summation yields a valid probability mass function ($\sum_i W^{\mathrm{RF}}_i = 1$) suitable for importance sampling \citep{raynal_abc_2019}.
While multi-output forests are possible, we employ per-parameter forests to allow the splitting rules to adapt specifically to the sensitivity of each parameter component $\theta_j$.
Finally, we quantify the concentration of the measure using the effective sample size (ESS), $N_{\mathrm{eff}} = (\sum_{i=1}^{N}\tilde w_i^{2})^{-1}$, which guides the decision to resample.

The computational overhead of the random forest step is negligible relative to the cost of the simulator. Training complexity scales quasilinearly with the sample size $N$ and linearly with the total number of trees $B_{\text{tot}}$ \citep{louppe_understanding_2014}. Furthermore, querying the weights at $\bm{s}_y$ requires only a single path traversal per tree, which is computationally instantaneous. Given that the simulator typically dominates the wall-time budget, and tree-based ensembles scale efficiently to large datasets \citep{breiman_random_2001, louppe_understanding_2014}, the forest preconditioning step does not introduce a bottleneck.

\section{Hyperparameter setup}\label{sec:hyperparameters}
\textbf{Normalising flow}:
We use a conditional neural spline flow \citep{durkan_neural_2019} implemented in \textsf{flowjax} \citep{ward_flowjax_2025}. The flow comprises eight coupling layers, each with a rational quadratic spline transformer using 10 knots over the interval $[-8, 8]$, outside this range the transformer defaults to the identity, which is important when the observed summaries lie in the tails under misspecification. Each coupling layer's conditioner is a multilayer perceptron with 128 hidden units. Both summary statistics and parameters are standardised to zero mean and unit variance before training. When the prior has compact support, parameters are first mapped to $\mathbb{R}$ via the logit transform and standardised in unconstrained space, posterior samples are mapped back through the sigmoid. The flow is trained by minimising the negative log-likelihood with the Adam optimiser \citep{kingma_adam_2017} at a learning rate of $5\times10^{-4}$ and a batch size of 512. Training is stopped when the validation loss has not improved for 10 consecutive epochs, or after 500 epochs, whichever occurs first.

\textbf{RNPE denoiser}:
We use the spike-and-slab error model of \citet{ward_robust_2022} with a Gaussian spike $\mathcal{N}(0, \sigma_{\mathrm{spike}}^2)$ and Cauchy slab $\mathrm{Cauchy}(0, \sigma_{\mathrm{slab}})$, setting $\sigma_{\mathrm{spike}} = 0.01$ and $\sigma_{\mathrm{slab}} = 0.25$.
The marginal summary density $h_{\bm{\psi}}(\bm{s})$ is an unconditional neural spline flow sharing the same architecture as the posterior flow (eight coupling layers, 10 knots over $[-8,8]$, single-hidden-layer MLP with 128 units), trained on the whitened, preconditioned summaries.

\textbf{MCMC (denoising sampler}):
Denoised summaries are drawn via a single NUTS chain \citep{hoffman_no-u-turn_2014} implemented in \textsf{NumPyro} \citep{phan_composable_2019}, with target acceptance probability $0.9$, 1,000 warmup iterations, and 2,000 post-warmup samples with no thinning.

\textbf{SMC ABC}:
We use the adaptive replenishment SMC-ABC algorithm of \citet{drovandi_estimation_2011}, as described in Appendix~\ref{sec:app_smcabc_background}. The population size $N$ is set per example, typically $N=4,000$. We initialise with a large tolerance $\varepsilon_0=10^{6}$ and at each generation reduce it to the $\alpha$-quantile of the current distance distribution with $\alpha=0.5$, subject to a floor of $\varepsilon_{\min}=10^{-3}$. Resampled particles are diversified via the MCMC kernel of Algorithm~\ref{alg:smcabc-replenish} with proposal covariance equal to the empirical covariance of the alive set and duplication probability $c=0.01$. The algorithm terminates when the acceptance rate falls below $0.10$ or after $T_{\max}=3$ generations. The initial number of MCMC repeats is $R=1$ with subsequent values adapted as described in Appendix~\ref{sec:app_smcabc_background}. We use Euclidean distance as the default discrepancy unless otherwise stated.

\textbf{Random forest}:
We implement forest-proximity preconditioning using \textsf{scikit-learn} \citep{pedregosa_scikit-learn_2011}.
By default, we train a separate random forest regressor for each parameter component $\theta_j$.
Each forest consists of 800 trees. To ensure conservative weighting, we constrain the topology with a maximum depth of 10 and a minimum leaf size of 40 samples (implying a minimum split size of 80).
We utilise bootstrap sampling with out-of-bag evaluation enabled.
In the default per-parameter setting, we consider all summary features at every split; for the multi-output alternative, we consider the square root of the feature count.
Splits require a minimum impurity decrease of $10^{-6}$.
Calculations are parallelised across all available processor cores.
Optionally, training can be restricted to a subset of size $m_{\text{fit}}=\lceil \rho N\rceil$ with fraction $\rho\in(0,1]$.
The final weights are derived from the leaf proximities as defined in Algorithm~2 in the main text.

\section{Proof of the amortisation-gap bound}\label{app:am_gap_proof}

We establish the bound on the amortisation gap used in Section~3.2 in the main text, accounting for summary-dependent preconditioning weights.

\paragraph{Setup.}
Recall the training design $p_{\text{train}}(\bm{\theta},\bm{s}) = \pi(\bm{\theta})\,p(\bm{s}\mid\bm{\theta})$ with marginal $p_{\text{train}}(\bm{s}) = \int p_{\text{train}}(\bm{\theta},\bm{s})\,\mathrm{d}\bm{\theta}$ and conditional $\pi(\bm{\theta}\mid\bm{s}) = p_{\text{train}}(\bm{\theta},\bm{s})/p_{\text{train}}(\bm{s})$.
For each $\bm{s}\in\mathcal{S}$, define the pointwise cross-entropy
\begin{equation}\label{eq:pointwise_loss}
\mathcal{L}(\bm{\phi};\bm{s}) = \int -\log q_{\bm{\phi}}(\bm{\theta}\mid\bm{s})\,\pi(\bm{\theta}\mid\bm{s})\,\mathrm{d}\bm{\theta}.
\end{equation}
Let $w_y\colon\mathcal{S}\to[0,\infty)$ be a data-dependent weight function.
Assume without loss of generality that $\mathbb{E}_{p_{\text{train}}}[w_y(\bm{s})] = 1$; since $\bm{\phi}^{\star}_w$ is invariant to rescaling of $w_y$, this is without loss.
Define the following minimisers:
\begin{align*}
\hat{\bm{\phi}}(\bm{s}_y) &:= \argmin_{\bm{\phi}\in\Phi}\;\mathcal{L}(\bm{\phi};\bm{s}_y), \\
\bm{\phi}^{\star}_w &:= \argmin_{\bm{\phi}\in\Phi}\;\int \mathcal{L}(\bm{\phi};\bm{s})\,w_y(\bm{s})\,p_{\text{train}}(\bm{s})\,\mathrm{d}\bm{s}.
\end{align*}
Here $\hat{\bm{\phi}}(\bm{s}_y)$ is the local minimiser at the observation, and $\bm{\phi}^{\star}_w$ is the population minimiser of the weighted global risk.
The amortisation gap under the weighted design is
\begin{equation}\label{eq:am_gap_w}
\Delta_{\mathrm{am}}(\bm{s}_y) := \mathcal{L}(\bm{\phi}^{\star}_w;\bm{s}_y) - \mathcal{L}(\hat{\bm{\phi}}(\bm{s}_y);\bm{s}_y) \ge 0,
\end{equation}
where positivity follows from the definition of $\hat{\bm{\phi}}(\bm{s}_y)$.
Setting $w_y \equiv 1$ recovers the standard (unweighted) amortisation gap.

\begin{assumption}\label{ass:lipschitz}
The following conditions hold.
\begin{enumerate}
    \item[(i)] There exist a positive function $C_1\colon\Theta\times\Phi\to\mathbb{R}_+$ such that, for any $\bm{s},\bm{s}'\in\mathcal{S}$,
    \[
    |\log q_{\bm{\phi}}(\bm{\theta}\mid\bm{s}) - \log q_{\bm{\phi}}(\bm{\theta}\mid\bm{s}')| \le C_1(\bm{\theta},\bm{\phi})\,\|\bm{s}-\bm{s}'\|.
    \]
    For $\bm{\phi}\in\{\bm{\phi}^{\star}_w,\,\hat{\bm{\phi}}(\bm{s}_y)\}$, define
    \[
    \bar{C}_1 := \int C_1(\bm{\theta},\bm{\phi})\,\pi(\bm{\theta}\mid\bm{s}_y)\,\mathrm{d}\bm{\theta} < \infty, \qquad
    \bar{C}_2 := \sqrt{\int C_1^2(\bm{\theta},\bm{\phi})\,w_y(\bm{s})\,p_{\text{train}}(\bm{\theta},\bm{s})\,\mathrm{d}\bm{\theta}\,\mathrm{d}\bm{s}} < \infty.
    \]

    \item[(ii)] There exist a positive function $f\colon\Theta\times\mathcal{S}\to\mathbb{R}_+$ and a constant $\kappa>1$ such that, for any $\bm{s}\in\mathcal{S}$,
    \[
    |\pi(\bm{\theta}\mid\bm{s}_y) - \pi(\bm{\theta}\mid\bm{s})| \le f(\bm{\theta}\mid\bm{s}_y)\,\|\bm{s}-\bm{s}_y\|^{\kappa},
    \]
    where
    \[
    \bar{C}_3 := \int f(\bm{\theta}\mid\bm{s}_y)\,|\log q_{\bm{\phi}}(\bm{\theta}\mid\bm{s}_y)|\,\mathrm{d}\bm{\theta} < \infty
    \]
    for $\bm{\phi}\in\{\bm{\phi}^{\star}_w,\,\hat{\bm{\phi}}(\bm{s}_y)\}$.
\end{enumerate}
\end{assumption}
We now restate Lemma 1 in the main text here and prove the result. 
\begin{lemma}[Amortisation-gap bound]\label{lem:am_gap_appendix}
Under Assumption~\ref{ass:lipschitz},
\begin{equation}\label{eq:am_gap_bound_full}
\begin{aligned}
\Delta_{\mathrm{am}}(\bm{s}_y)
&\le 4\bar{C}_1\,\mathbb{E}_{p_{\text{train}}(\bm{s})}
    \big[w_y(\bm{s})\|\bm{s}-\bm{s}_y\|\big] \\
&\quad + 2\bar{C}_2\sqrt{\mathbb{E}_{p_{\text{train}}(\bm{s})}
    \big[w_y(\bm{s})\|\bm{s}-\bm{s}_y\|^2\big]} \\
&\quad + 2\bar{C}_3\,\mathbb{E}_{p_{\text{train}}(\bm{s})}
    \big[w_y(\bm{s})\|\bm{s}-\bm{s}_y\|^{\kappa}\big].
\end{aligned}
\end{equation}
\end{lemma}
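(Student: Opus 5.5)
We plan to use a standard ``add-and-subtract'' argument around the weighted risk $R_w(\bm{\phi}) := \int \mathcal{L}(\bm{\phi};\bm{s})\,w_y(\bm{s})\,p_{\text{train}}(\bm{s})\,\mathrm{d}\bm{s}$, normalised so that $\mathbb{E}_{p_{\text{train}}}[w_y]=1$. Write $\bm{\phi}^\star=\bm{\phi}^\star_w$ and $\hat{\bm{\phi}}=\hat{\bm{\phi}}(\bm{s}_y)$. We decompose
\[
\Delta_{\mathrm{am}}(\bm{s}_y) = \bigl[\mathcal{L}(\bm{\phi}^\star;\bm{s}_y) - R_w(\bm{\phi}^\star)\bigr] + \bigl[R_w(\bm{\phi}^\star) - R_w(\hat{\bm{\phi}})\bigr] + \bigl[R_w(\hat{\bm{\phi}}) - \mathcal{L}(\hat{\bm{\phi}};\bm{s}_y)\bigr].
\]
The middle bracket is $\le 0$ by optimality of $\bm{\phi}^\star$ for $R_w$. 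Hence
\[
\Delta_{\mathrm{am}}(\bm{s}_y) \le 2\sup_{\bm{\phi}\in\{\bm{\phi}^\star,\hat{\bm{\phi}}\}} \bigl|R_w(\bm{\phi}) - \mathcal{L}(\bm{\phi};\bm{s}_y)\bigr|.
\]
This explains the factors of $2$. By normalisation, $R_w(\bm{\phi})-\mathcal{L}(\bm{\phi};\bm{s}_y)=\int[\mathcal{L}(\bm{\phi};\bm{s})-\mathcal{L}(\bm{\phi};\bm{s}_y)]\,w_y(\bm{s})\,p_{\text{train}}(\bm{s})\,\mathrm{d}\bm{s}$. Everything therefore reduces to bounding the pointwise difference $\mathcal{L}(\bm{\phi};\bm{s})-\mathcal{L}(\bm{\phi};\bm{s}_y)$ for the two admissible $\bm{\phi}$ and integrating against $w_y\,p_{\text{train}}$.

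For fixed $\bm{s}$, we split the pointwise difference into a ``change of conditioning argument in $q$'' part and a ``change of integrating measure'' part:
\[
\mathcal{L}(\bm{\phi};\bm{s})-\mathcal{L}(\bm{\phi};\bm{s}_y)
= \int\bigl[\log q_{\bm{\phi}}(\bm{\theta}\mid\bm{s}_y)-\log q_{\bm{\phi}}(\bm{\theta}\mid\bm{s})\bigr]\pi(\bm{\theta}\mid\bm{s})\,\mathrm{d}\bm{\theta}
+ \int -\log q_{\bm{\phi}}(\bm{\theta}\mid\bm{s}_y)\bigl[\pi(\bm{\theta}\mid\bm{s})-\pi(\bm{\theta}\mid\bm{s}_y)\bigr]\mathrm{d}\bm{\theta}.
\]
The second term is handled directly by Assumption~\ref{ass:lipschitz}(ii). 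It is bounded by $\bar{C}_3\|\bm{s}-\bm{s}_y\|^\kappa$, and after integration and the factor $2$ it yields the $\kappa$-moment term. For the first term, Assumption~\ref{ass:lipschitz}(i) gives the bound $\|\bm{s}-\bm{s}_y\|\int C_1(\bm{\theta},\bm{\phi})\pi(\bm{\theta}\mid\bm{s})\,\mathrm{d}\bm{\theta}$. The difficulty is that $\bar{C}_1$ is defined by integrating against $\pi(\cdot\mid\bm{s}_y)$, not $\pi(\cdot\mid\bm{s})$.

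To resolve this, we write $\pi(\bm{\theta}\mid\bm{s}) = \pi(\bm{\theta}\mid\bm{s}_y) + [\pi(\bm{\theta}\mid\bm{s})-\pi(\bm{\theta}\mid\bm{s}_y)]$. The first piece yields $\bar{C}_1\|\bm{s}-\bm{s}_y\|$. For the residual piece, we bound its absolute value by $C_1\|\bm{s}-\bm{s}_y\|(\pi(\bm{\theta}\mid\bm{s})+\pi(\bm{\theta}\mid\bm{s}_y))$. After integrating over $\bm{s}$ against $w_y p_{\text{train}}$, this produces one more $\bar{C}_1$ moment, which gives the total constant $4=2\cdot 2$. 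It also produces the term $\int C_1(\bm{\theta},\bm{\phi})\|\bm{s}-\bm{s}_y\|\,w_y(\bm{s})\,p_{\text{train}}(\bm{\theta},\bm{s})\,\mathrm{d}\bm{\theta}\,\mathrm{d}\bm{s}$, since $\pi(\bm{\theta}\mid\bm{s})p_{\text{train}}(\bm{s})=p_{\text{train}}(\bm{\theta},\bm{s})$. Cauchy--Schwarz under the measure $w_y\,p_{\text{train}}(\bm{\theta},\bm{s})$ bounds this last term by $\bar{C}_2\sqrt{\mathbb{E}_{p_{\text{train}}(\bm{s})}[w_y\|\bm{s}-\bm{s}_y\|^2]}$. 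Collecting all terms with the outer factor $2$ gives \eqref{eq:am_gap_bound_full}.

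The main obstacle is the bookkeeping in this last step. The mixed term must be arranged so that one part is controlled by $\bar{C}_1$ (against $\pi(\cdot\mid\bm{s}_y)$) and the other by $\bar{C}_2$ (against the weighted joint), with neither double-counted. Getting exactly the constants $4,2,2$ may also require a slightly cruder triangle inequality, where the $\pi(\cdot\mid\bm{s}_y)$-part of the residual is simply absorbed into the $\bar{C}_1$ term. An alternative would be to bound the residual via (ii) to obtain higher-order terms, but that would not give the stated form, so the plan is the direct triangle-inequality route. Finally, we note that all interchanges of integrals are justified by the finiteness of $\bar{C}_1,\bar{C}_2,\bar{C}_3$ and Tonelli's theorem, since all integrands after taking absolute values are nonnegative.
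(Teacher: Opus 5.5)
Your proposal is correct and follows essentially the paper's proof. It uses the same three-term decomposition around the weighted risk with the middle term $\le 0$, the same split of $\mathcal{L}(\bm{\phi};\bm{s})-\mathcal{L}(\bm{\phi};\bm{s}_y)$ into a Lipschitz part (Cauchy--Schwarz against $w_y(\bm{s})\,p_{\text{train}}(\bm{\theta},\bm{s})$, giving $\bar{C}_2$) and a H\"older part (giving $\bar{C}_3$), and the same symmetric treatment of $\bm{\phi}^{\star}_w$ and $\hat{\bm{\phi}}(\bm{s}_y)$ to produce the factors of $2$.

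The only difference is where the $\bar{C}_1$ terms come from: the paper bounds two terms, $A_1$ and $A_{2.2}$, that are exact negatives of each other, while you add and subtract $\pi(\bm{\theta}\mid\bm{s}_y)$. In both versions these terms are pure slack, because the direct Cauchy--Schwarz step already bounds the whole Lipschitz part by $\bar{C}_2\sqrt{\mathbb{E}_{p_{\text{train}}(\bm{s})}[w_y(\bm{s})\|\bm{s}-\bm{s}_y\|^2]}$. So the ``difficulty'' you flag is not a real obstacle, and the stated bound follows \emph{a fortiori}.
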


\begin{proof}
Decompose the amortisation gap as
\begin{align*}
\Delta_{\mathrm{am}}(\bm{s}_y) &= \mathcal{L}(\bm{\phi}^{\star}_w;\bm{s}_y) - \mathcal{L}(\hat{\bm{\phi}}(\bm{s}_y);\bm{s}_y) \\
&= \underbrace{\mathcal{L}(\bm{\phi}^{\star}_w;\bm{s}_y) - \int \mathcal{L}(\bm{\phi}^{\star}_w;\bm{s})\,w_y(\bm{s})\,p_{\text{train}}(\bm{s})\,\mathrm{d}\bm{s}}_{A} \\
&\quad + \underbrace{\int \mathcal{L}(\bm{\phi}^{\star}_w;\bm{s})\,w_y(\bm{s})\,p_{\text{train}}(\bm{s})\,\mathrm{d}\bm{s} - \int \mathcal{L}(\hat{\bm{\phi}}(\bm{s}_y);\bm{s})\,w_y(\bm{s})\,p_{\text{train}}(\bm{s})\,\mathrm{d}\bm{s}}_{B} \\
&\quad + \underbrace{\int \mathcal{L}(\hat{\bm{\phi}}(\bm{s}_y);\bm{s})\,w_y(\bm{s})\,p_{\text{train}}(\bm{s})\,\mathrm{d}\bm{s} - \mathcal{L}(\hat{\bm{\phi}}(\bm{s}_y);\bm{s}_y)}_{C}.
\end{align*}

\noindent\textbf{Term $B$.}
Using $\pi(\bm{\theta}\mid\bm{s})\,p_{\text{train}}(\bm{s})
= p_{\text{train}}(\bm{\theta},\bm{s})$ and the definition
of $\bm{\phi}^{\star}_w$, we have
\begin{align*}
\int \mathcal{L}(\bm{\phi}^{\star}_w;\bm{s})\,w_y(\bm{s})\,
    p_{\text{train}}(\bm{s})\,\mathrm{d}\bm{s}
&= \mathbb{E}_{(\bm{\theta},\bm{s})\sim p_{\text{train}}}
    \big[-w_y(\bm{s})\log q_{\bm{\phi}^{\star}_w}
    (\bm{\theta}\mid\bm{s})\big] \\
&= \inf_{\bm{\phi}\in\Phi}\;
    \mathbb{E}_{(\bm{\theta},\bm{s})\sim p_{\text{train}}}
    \big[-w_y(\bm{s})\log q_{\bm{\phi}}
    (\bm{\theta}\mid\bm{s})\big].
\end{align*}
Thus, the first term of $B$ is the infimum of the weighted
objective, and must be (weakly) smaller than the same objective evaluated
at $\hat{\bm{\phi}}(\bm{s}_y)$. Hence $B \le 0$, and this term can be dropped from the upper bound.

\noindent\textbf{Term $A$.}
Fix $\bm{\phi}\in\Phi$. Expand the difference and separate the variation in $q_{\bm{\phi}}(\bm{\theta}\mid\bm{s})$ from the variation in $\pi(\bm{\theta}\mid\bm{s})$, to obtain
\begin{align*}
&\mathcal{L}(\bm{\phi};\bm{s}_y) - \int \mathcal{L}(\bm{\phi};\bm{s})\,w_y(\bm{s})\,p_{\text{train}}(\bm{s})\,\mathrm{d}\bm{s} \\
&= -\int\big[\log q_{\bm{\phi}}(\bm{\theta}\mid\bm{s}_y)\,\pi(\bm{\theta}\mid\bm{s}_y) - \log q_{\bm{\phi}}(\bm{\theta}\mid\bm{s})\,\pi(\bm{\theta}\mid\bm{s})\big]\,w_y(\bm{s})\,p_{\text{train}}(\bm{s})\,\mathrm{d}\bm{\theta}\,\mathrm{d}\bm{s} \\
&= \underbrace{-\int\big[\log q_{\bm{\phi}}(\bm{\theta}\mid\bm{s}_y) - \log q_{\bm{\phi}}(\bm{\theta}\mid\bm{s})\big]\,\pi(\bm{\theta}\mid\bm{s}_y)\,w_y(\bm{s})\,p_{\text{train}}(\bm{s})\,\mathrm{d}\bm{\theta}\,\mathrm{d}\bm{s}}_{A_1} \\
&\quad \underbrace{-\int\log q_{\bm{\phi}}(\bm{\theta}\mid\bm{s})\,\big[\pi(\bm{\theta}\mid\bm{s}_y) - \pi(\bm{\theta}\mid\bm{s})\big]\,w_y(\bm{s})\,p_{\text{train}}(\bm{s})\,\mathrm{d}\bm{s}\,\mathrm{d}\bm{\theta}}_{A_2}.
\end{align*}

For $A_1$, Assumption~\ref{ass:lipschitz}(i) yields
\[
A_1 \le \bar{C}_1\,\mathbb{E}_{p_{\text{train}}(\bm{s})}
    \big[w_y(\bm{s})\|\bm{s}-\bm{s}_y\|\big].
\]

For $A_2$, add and subtract $\log q_{\bm{\phi}}(\bm{\theta}\mid\bm{s}_y)$ to decompose:
\begin{align*}
A_2 &= \underbrace{-\int\log q_{\bm{\phi}}(\bm{\theta}\mid\bm{s}_y)\,\big[\pi(\bm{\theta}\mid\bm{s}_y) - \pi(\bm{\theta}\mid\bm{s})\big]\,w_y(\bm{s})\,p_{\text{train}}(\bm{s})\,\mathrm{d}\bm{s}\,\mathrm{d}\bm{\theta}}_{A_{2.1}} \\
&\quad \underbrace{-\int\big[\log q_{\bm{\phi}}(\bm{\theta}\mid\bm{s}) - \log q_{\bm{\phi}}(\bm{\theta}\mid\bm{s}_y)\big]\,\pi(\bm{\theta}\mid\bm{s}_y)\,w_y(\bm{s})\,p_{\text{train}}(\bm{s})\,\mathrm{d}\bm{s}\,\mathrm{d}\bm{\theta}}_{A_{2.2}} \\
&\quad + \underbrace{\int\big[\log q_{\bm{\phi}}(\bm{\theta}\mid\bm{s}) - \log q_{\bm{\phi}}(\bm{\theta}\mid\bm{s}_y)\big]\,\pi(\bm{\theta}\mid\bm{s})\,w_y(\bm{s})\,p_{\text{train}}(\bm{s})\,\mathrm{d}\bm{s}\,\mathrm{d}\bm{\theta}}_{A_{2.3}}.
\end{align*}

Applying Assumption~\ref{ass:lipschitz}(i) to $A_{2.2}$,
\[
A_{2.2} \le \bar{C}_1\,\mathbb{E}_{p_{\text{train}}(\bm{s})}
    \big[w_y(\bm{s})\|\bm{s}-\bm{s}_y\|\big].
\]

For $A_{2.3}$, apply Assumption~\ref{ass:lipschitz}(i) and the Cauchy--Schwarz inequality:
\begin{align*}
A_{2.3} &\le \int C_1(\bm{\theta},\bm{\phi})\,\|\bm{s}-\bm{s}_y\|\,\pi(\bm{\theta}\mid\bm{s})\,w_y(\bm{s})\,p_{\text{train}}(\bm{s})\,\mathrm{d}\bm{\theta}\,\mathrm{d}\bm{s} \\
&\le \sqrt{\int C_1^2(\bm{\theta},\bm{\phi})\,w_y(\bm{s})\,p_{\text{train}}(\bm{\theta},\bm{s})\,\mathrm{d}\bm{\theta}\,\mathrm{d}\bm{s}}\;\sqrt{\mathbb{E}_{p_{\text{train}}(\bm{s})}
    \big[w_y(\bm{s})\|\bm{s}-\bm{s}_y\|^2\big]} \\
&= \bar{C}_2\sqrt{\mathbb{E}_{p_{\text{train}}(\bm{s})}
    \big[w_y(\bm{s})\|\bm{s}-\bm{s}_y\|^2\big]},
\end{align*}
where the second inequality uses Cauchy--Schwarz with $\pi(\bm{\theta}\mid\bm{s})\,p_{\text{train}}(\bm{s}) = p_{\text{train}}(\bm{\theta},\bm{s})$, and the equality holds by the definition of $\bar{C}_2$.

For $A_{2.1}$, apply Assumption~\ref{ass:lipschitz}(ii):
\begin{align*}
A_{2.1}\le& \int \int |\log q_\phi(\tt\mid\ss_y)||\pi(\tt\mid\ss_y)-\pi(\tt\mid\ss)|w(\ss)p_{\text{train}}(\ss)\dt\tt\dt\ss\\\le& \int \int |\log q_\phi(\tt\mid\ss_y)|f(\tt\mid\ss_y)\|\ss-\ss_y\|^\kappa w(\ss)p_{\text{train}}(\ss)\dt\tt\dt\ss 
\\\le& \bar{C}_3\mathbb{E}_{\ss\sim p_{\text{train}}(\ss)}[w(\ss)\|\ss-\ss_y\|^\kappa].
\end{align*}
Collecting $A_1$, $A_{2.1}$, $A_{2.2}$, and $A_{2.3}$:
\[
A \le 2\bar{C}_1\,\mathbb{E}_{p_{\text{train}}(\bm{s})}
    \big[w_y(\bm{s})\|\bm{s}-\bm{s}_y\|\big]
+ \bar{C}_2\sqrt{\mathbb{E}_{p_{\text{train}}(\bm{s})}
    \big[w_y(\bm{s})\|\bm{s}-\bm{s}_y\|^2\big]}
+ \bar{C}_3\,\mathbb{E}_{p_{\text{train}}(\bm{s})}
    \big[w_y(\bm{s})\|\bm{s}-\bm{s}_y\|^{\kappa}\big].
\]

\noindent\textbf{Term $C$.}
For any fixed $\bm{\phi}$, the bound derived for Term $A$ applies to $|h(\bm{\phi})|$ where $h(\bm{\phi}) := \mathcal{L}(\bm{\phi};\bm{s}_y) - \int \mathcal{L}(\bm{\phi};\bm{s})\,w_y(\bm{s})\,p_{\text{train}}(\bm{s})\,\mathrm{d}\bm{s}$.
Since $C = -h(\hat{\bm{\phi}}(\bm{s}_y))$, we have the upper bound:
\[
C \le 2\bar{C}_1\,\mathbb{E}_{p_{\text{train}}(\bm{s})}
    \big[w_y(\bm{s})\|\bm{s}-\bm{s}_y\|\big]
+ \bar{C}_2\sqrt{\mathbb{E}_{p_{\text{train}}(\bm{s})}
    \big[w_y(\bm{s})\|\bm{s}-\bm{s}_y\|^2\big]}
+ \bar{C}_3\,\mathbb{E}_{p_{\text{train}}(\bm{s})}
    \big[w_y(\bm{s})\|\bm{s}-\bm{s}_y\|^{\kappa}\big].
\]

Adding the bounds for $A$, $B$ (where $B \le 0$), and $C$ yields the result.
\end{proof}

The following is a direct consequence.

\begin{corollary}\label{cor:indicator_gap}
If $w_y(\bm{s}) \le C'\,\mathbb{I}\{\|\bm{s}-\bm{s}_y\|\le\epsilon\}$ for some $C'>0$ and $\epsilon>0$, then under Assumption~\ref{ass:lipschitz},
\[
\Delta_{\mathrm{am}}(\bm{s}_y) \le C\big[\epsilon \vee \epsilon^{\kappa}\big],
\]
where $C>0$ depends on $C'$, $\bar{C}_1$, $\bar{C}_2$, $\bar{C}_3$, and $\kappa$.
\end{corollary}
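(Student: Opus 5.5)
The plan is to substitute the indicator domination directly into each of the three weighted moments in Lemma~\ref{lem:am_gap_appendix} and bound them one at a time. We work under the normalisation used in the proof, $\mathbb{E}_{p_{\text{train}}}[w_y(\bm{s})]=1$. Since $w_y(\bm{s})\le C'\,\mathbb{I}\{\|\bm{s}-\bm{s}_y\|\le\epsilon\}$, the weight vanishes off the ball $\{\|\bm{s}-\bm{s}_y\|\le\epsilon\}$. On that ball we have $\|\bm{s}-\bm{s}_y\|^{p}\le\epsilon^{p}$ for every $p>0$. Hence, for $p\in\{1,2,\kappa\}$,
\[
\mathbb{E}_{p_{\text{train}}(\bm{s})}\big[w_y(\bm{s})\|\bm{s}-\bm{s}_y\|^{p}\big]\le \epsilon^{p}\,\mathbb{E}_{p_{\text{train}}}[w_y(\bm{s})]=\epsilon^{p}.
\]
Alternatively, without the normalisation one gets the bound $C'\epsilon^{p}\,P_{\text{train}}(\|\bm{s}-\bm{s}_y\|\le\epsilon)\le C'\epsilon^{p}$. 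Either route yields a bound $\max(1,C')\,\epsilon^{p}$. We keep the $C'$ form so that the constant visibly depends on $C'$, as the statement says. One caveat: under the normalisation, $C'$ must be at least $1/P_{\text{train}}(\text{ball})$. That affects only constants, not the rate.

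Next we plug these into \eqref{eq:am_gap_bound_full}:
\[
\Delta_{\mathrm{am}}(\bm{s}_y)\le 4\bar C_1 C'\epsilon+2\bar C_2\sqrt{C'}\,\epsilon+2\bar C_3 C'\epsilon^{\kappa}.
\]
The square root of the second moment gives $\sqrt{C'}\,\epsilon$, so the middle term is also linear in $\epsilon$. Each power satisfies $\epsilon\le\epsilon\vee\epsilon^{\kappa}$ and $\epsilon^{\kappa}\le\epsilon\vee\epsilon^{\kappa}$. Therefore the whole bound is at most $C[\epsilon\vee\epsilon^{\kappa}]$ with $C=4\bar C_1C'+2\bar C_2\sqrt{C'}+2\bar C_3C'$, which depends only on $C'$ and $\bar C_1,\bar C_2,\bar C_3$. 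The dependence on $\kappa$ enters only through the maximum. In the regime $\epsilon<1$ this gives $O(\epsilon)$, since $\kappa>1$.

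The one point needing care is that $\bar C_2$ is itself defined with $w_y$ inside it, so it is not independent of the weight. I would note that $\bar C_2$ stays bounded as the weight concentrates. Specifically, $\bar C_2^{2}\le C'\int C_1^{2}\,p_{\text{train}}\,\mathrm{d}\bm{\theta}\,\mathrm{d}\bm{s}$, which can be absorbed into the constant. Alternatively, one can simply treat $\bar C_2$ as given, as the statement permits. With that observation the corollary is immediate, and there is no real obstacle.
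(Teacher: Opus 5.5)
Your proposal is correct and is exactly the substitution argument the paper intends when it calls the corollary a direct consequence of Lemma~\ref{lem:am_gap_appendix}. Each weighted moment is bounded by $C'\epsilon^{p}$, or by $\epsilon^{p}$ under the normalisation $\mathbb{E}_{p_{\text{train}}}[w_y(\bm{s})]=1$ (which the lemma's proof does use), and then $\epsilon,\epsilon^{\kappa}\le\epsilon\vee\epsilon^{\kappa}$. Your caveats are apt, with two small refinements: the normalised bound $\epsilon^{p}$ is the one to keep if $\epsilon$ shrinks, since then $C'\ge 1/P_{\text{train}}(\|\bm{s}-\bm{s}_y\|\le\epsilon)$ grows; and $\bar C_1,\bar C_3$ also depend on $w_y$ through $\bm{\phi}^{\star}_w$, so all three constants are treated as given, as the statement allows.
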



\end{document}